%   Optimal discriminating designs for several competing regression models
%   poly10.eps, poly20.eps, polyiter1.eps, polyiter2.eps, ratexp1.eps, ratexp2.eps,ratexp3.eps,ratexp.eps
%   step length,  announce, algorithm, program, acknowledgment

\documentclass[leqno,12pt]{article}

\usepackage{amsmath}
\usepackage{amssymb}
\usepackage[authoryear]{natbib}
\bibliographystyle{apalike}
\usepackage{graphicx}
\usepackage{epsfig}
\usepackage{color}
\usepackage{setspace}

\setlength{\textheight}{21.8cm}%{22.25cm}
\setlength{\textwidth}{17.65cm}
\setlength{\oddsidemargin}{-20pt}
\setlength{\evensidemargin}{-20pt}
\setlength{\topmargin}{-41pt}

\setlength{\unitlength}{1cm}

\setlength{\parskip}{2pt}

\makeatletter\@addtoreset{equation}{section}\makeatother

\newcommand{\R}{\mathbb{R}}
\newcommand{\N}{\mathbb{N}}

\newcommand{\bea}{\begin{eqnarray*}}
\newcommand{\eea}{\end{eqnarray*}}
\newcommand{\be}{\begin{eqnarray}}
\newcommand{\ee}{\end{eqnarray}}
\newcommand{\beq}{\begin{equation}}
\newcommand{\eeq}{\end{equation}}
\newtheorem{assumption}{Assumption}[section]

\newtheorem{thm}{Theorem}[section]
\newtheorem{cor}[thm]{Corollary}
\newtheorem{lemma}[thm]{Lemma}

\newtheorem{rem}[thm]{Remark}

\newtheorem{algorithm}[thm]{Algorithm}

\def\3{\ss}

%\newcommand{\red}[1]{\textbf{\color{red}#1}}

%-------------------------------------------------------

\DeclareMathOperator*{\arginf}{arg\,inf}
\DeclareMathOperator*{\argmax}{arg\,max}

\parindent 0cm

\begin{document}

\title{Bayesian $T$-optimal discriminating designs }

\author{
{\small Holger Dette} \\
{\small Ruhr-Universit\"at Bochum} \\[-6pt]
{\small Fakult\"at f\"ur Mathematik} \\[-6pt]
{\small 44780 Bochum, Germany} \\[-6pt]
{\small e-mail: holger.dette@rub.de}\\
{\small Viatcheslav B. Melas, Roman Guchenko } \\[-6pt]
\small St. Petersburg State University \\[-6pt]
\small Department of Mathematics \\[-6pt]
\small St. Petersburg,  Russia \\[-6pt]
{\small e-mail: vbmelas@post.ru,romanguchenko@ya.ru}}
\maketitle

\begin{abstract}
The problem of constructing Bayesian optimal discriminating designs for a class of regression models
with respect to  the
$T$-optimality criterion  introduced by \cite{atkfed1975a} is considered. It is demonstrated that the discretization
 of the integral with respect to the prior distribution leads to locally $T$-optimal discriminating design  problems with
 a large number of model comparisons. Current methodology for the numerical construction of discrimination designs
  can only deal with a few comparisons, but the discretization of the Bayesian prior easily yields to discrimination design problems for
 more than $100$  competing models.  A new  efficient  method is developed to deal with problems of this type.
It combines some features of the classical exchange type algorithm  with the gradient methods.
 Convergence is proved and it is demonstrated that
 the new method can find Bayesian optimal discriminating designs in situations where all currently available procedures fail.
\end{abstract}

\medskip

Keyword and Phrases: Design of experiment,  Bayesian optimal design; model discrimination; gradient methods; model uncertainty

AMS Subject Classification: 62K05

%---------------------------------------
\section{Introduction} \label{sec1}
%--------------------------------------

Optimal design theory provides useful tools to improve the accuracy of statistical inference without any additional costs by carefully
planning experiments before they are conducted. Numerous authors have worked on the construction of optimal
designs in various situations. For many models optimal designs have been developed explicitly
[see the monographs of  \cite{pukelsheim2006,atkinson2007}]
 and several algorithms
have been developed for their numerical construction if the optimal designs are not available in explicit form [see
\cite{yu2010,yanbie2013} among others]. On the other hand the construction of such
designs depends sensitively on the model assumptions and an optimal  design for a  particular model might be inefficient
if it is used in a different model. Moreover, in many experiments it is often not obvious which model should be finally  fitted to the data
and model building is an important part of data analysis. A typical  and very important
example are Phase II dose-finding studies, where various nonlinear regression  models
of the form
\be
  Y=\eta(x,\theta)+\varepsilon. \label{1.1}
  \ee
have been developed for describing the dose-response
relation  [see \cite{pinbrebra2006}], but the problem of model uncertainty arises in nearly any other statistical application.
As a consequence, the construction
 of efficient designs for model identification has become an important field in optimal design theory.
Early work can be found in \cite{stigler1971}, who determined % optimal
designs for discriminating between
two nested univariate polynomials by  minimizing  the volume of the confidence
ellipsoid for the parameters corresponding to the extension of the smaller model.
Several authors have   worked on this approach in various other classes of nested models
[see for example \cite{dethal1998} or  \cite{songwong1999} among others]. \\
A different approach to the problem of constructing optimal designs for model discrimination
is given in a pioneering paper  by \cite{atkfed1975a}, who proposed the $T$-optimality criterion
to construct designs for discriminating between two competing regression models.
Roughly speaking their approach provides a design such that the sum of squares
for a lack of fit test is large. \cite{atkfed1975b} extended this method
for discriminating a selected model $\eta_1$ from
a class of other regression models,
say $ \{\eta_2, \ldots , \eta_k \}$, $k \ge 2$. In contrast to  the work \cite{stigler1971} and followers
the $T$-optimality criterion  does not require competing nested models and has found considerable attention
in the statistical literature with   numerous applications including such important fields as chemistry
or pharmacokinetics
 [see e.g.\
 % \cite{fedorov1981}, \cite{fedkha1986}  for early and
  \cite{atkbogbog1998}, \cite{ucibog2005},
 \cite{loptomtra2007}, \citet{atkinson2008}, \cite{Tommasi09} or \cite{fooduf2011}
 for some more recent references]. A drawback of the $T$-optimality criterion consists of the fact that -- even in the case of linear models -- the criterion depends on the parameters of the model $\eta_1$.
 This means that $T$-optimality is a local optimality criterion in the sense of \cite{chernoff1953}, and that it requires some preliminary knowledge regarding the parameters. Consequently,
 most of the cited papers refer to locally $T$-optimal
 designs. Although there exist applications where such information is
 available [for example in the analysis of dose response studies as considered in  \cite{pinbrebra2006}], in most
 situations such knowledge can be rarely provided.  Several authors have introduced robust
 versions of the classical optimality criteria
  such as Bayesian or minimax $D$-optimality criteria in order to determine efficient designs for model discrimination,
 which are less sensitive with respect to the choice of parameters
 [see \cite{pronwalt1985,chaver1995,dette1997}].  The robustness problem of the $T$-optimality criterion
  has been already mentioned in
 \cite{atkfed1975a}, who proposed a Bayesian approach  to address the problem of parameter uncertainty in the $T$-optimality criterion.
    \cite{wiens2009} imposed (linear)  neighbourhoud
structures on each regression response and determined  least favorable
 points in  these neighbourhouds in order to robustify the locally $T$-optimal design problem.
  \cite{detmelshp2012}  considered polynomial regression models and determined explicitly Bayesian $T$-optimal
 discriminating  designs for the criterion introduced by  \cite{atkfed1975a}. Their results indicate the difficulties arising in Bayesian $T$-optimal design problems.  \\
 The scarcity of
  literature on  Bayesian
 $T$-optimal discriminating designs can be explained by the fact that in nearly all cases
 of practical interest these designs have to be found numerically, and even this  is a very hard problem.
These numerical difficulties become even apparent in the case   of locally $T$-optimal designs.
  \cite{atkfed1975a} proposed an exchange type algorithm, which has a rather slow rate of convergence and
  has been used by several authors. \cite{BraessDette2013} pointed out that, besides its slow convergence,
    this algorithm does not yield the
  solution of the optimal discriminating design problem, if more than $5$ model comparisons are under consideration.
  These authors developed  a more efficient algorithm for the determination of locally $T$-optimal discriminating designs
  for several competing regression models by exploring relations between optimal design problems
and  (nonlinear) vector-valued approximation theory.  Although the resulting algorithm provides a
substantial improvement of the exchange type methods it cannot deal with Bayesian optimality criteria  in general,
and the development of an efficient procedure for this purpose is a very challenging and open  problem. \\
The goal of the present paper is to fill  this gap. We utilize the fact
that in applications the integral  with respect to the prior distribution has to be determined by
a discrete approximation  and  we show that the discrete Bayesian $T$-optimal design problem
 is  a special case of the local $T$-optimality criterion for a very large number of competing
 models considered as in \cite{BraessDette2013}. The competing models arise from the different support points
 used for the approximation of the prior distribution by a discrete measure, and  the number of
model comparisons in the resulting criterion  easily exceeds the $200$. Therefore
the algorithm in  \cite{BraessDette2013} does not provide a 
 solution of the corresponding optimization problem, and we propose a new method for the numerical construction of
  Bayesian $T$-optimal designs
 with   substantial computational advantages.
Roughly speaking, the support points of the design in each iteration
are determined in a similar manner as proposed  in   \cite{atkfed1975a} but for the calculation of the
 corresponding  weights  we use a gradient approach. It turns out that the new procedure is extremely
efficient and is able to find Bayesian $T$-optimal designs with a few number of iterations. \\
 The remaining part of this paper is organized as follows. In Section \ref{sec2} we give an introduction into
 the problem of designing experiments for discriminating between  competing regression models and also derive some basic properties
 of locally $T$-optimal discriminating designs. In particular we show how the Bayesian $T$-optimal design
 problem is related to a local one with a large number of model comparisons [see Section \ref{sec2b}].
 Section \ref{sec3} is devoted to the construction of new numerical procedures (in particular Algorithm  \ref{algorithm:new}),
 for which we prove  convergence  to a $T$-optimal discriminating design. Our approach consists of
 two steps  consecutively optimizing with respect to the support points (Step 1) and weights of the design (Step 2).
For the second step we  also discuss two procedures to speed up the convergence of the algorithm.
The results are illustrated in Section \ref{sec5} calculating several Bayesian
$T$-optimal discriminating designs in examples, where all other available procedure do not provide a numerical solution
of the optimal design problem. For example, the new procedure is able to solve locally $T$-optimal designs with more
than $240$ model comparisons as they are arising frequently in Bayesian $T$-optimal design problems. In particular
we illustrate the methodology calculating Bayesian $T$-optimal discriminating designs for a dose finding clinical trial
which has recently been discussed in  \cite{pinbrebra2006}. The corresponding R-package will be provided in the CRAN library.
 Finally all proof are deferred to an appendix in Section \ref{sec6}.

 %-------------------------------------------
\section{$T$-optimal discriminating designs} \label{sec2}
%-------------------------------------------

Consider the regression model \eqref{1.1}, where $x$ belongs to some compact set $\mathcal{X}$ and observations at different experimental conditions are
independent.  For the sake of transparency and a clear representation we assume that the error $\varepsilon$
is  normally distributed. The methodology developed in the following discussion can be extended to more general error structures following
 the line of research  in   \cite{loptomtra2007}, but details are omitted for the sake of brevity.
  
  Throughout this paper we consider the situation
where    $\nu $ different models, say
\begin{align} \label{2.1}
\eta_i(x,\theta_{i}), \qquad i = 1,\dots,\nu,
\end{align}
are available to describe the dependency of $Y$ on the predictor $x$.
In \eqref{2.1} the quantity  $\theta_{i}$ denotes  a $d_i$-dimensional parameter, which varies in a compact
space, say $\Theta_i$ ($i=1,\ldots , \nu$).
Following \cite{kiefer1974} we consider approximate
designs that are defined as probability measures, say $\xi$,  with finite support.
 The support points $ x_1,\ldots, x_k $ of a design $\xi$ give the locations
 where observations are taken, while the weights $\omega_1,\ldots, \omega_k$ describe the
 relative proportions of observations at these points. If an approximate design is given and  $n$
 observations can be taken, a rounding procedure is applied
to obtain  integers $n_{i} $ ($i=1,\ldots,k)$  from the not necessarily integer valued quantities
$\omega_{i}n$ such that $\sum_{i=1}^kn_i=n$.
  We are interested in designing an
 experiment, such that a most appropriate model  can be  chosen from the given class
 $\{\eta_1, \ldots , \eta_\nu\}$ of competing models.

%-------------------------------------------
\subsection{$T$-optimal designs} \label{sec2a}
%-------------------------------------------

In the case of $\nu =2$ competing models  \cite{atkfed1975a} proposed to fix one model, say $\eta_1 (\cdot, \theta_1)$,  with corresponding parameter $\overline \theta_1$ and to maximize
 the functional
\begin{align} \label{2.2}
T_{12}(\xi) = \inf_{\theta_{2} \in \Theta_2} \int_{\mathcal{X}} \Big[ \eta_1(x,\overline{\theta}_{1}) - \eta_2(x,\theta_{2}) \Big]^2 \xi(dx),
\end{align}
in the class of all (approximate) designs. Roughly speaking, these designs maximize the power of the test of the
hypothesis ''$\eta_1$ versus $\eta_2$'' . Note that the resulting optimal design depends on the parameter   $\overline{\theta}_1$
for the first model, which has to be fixed by the experimenter. This means that these designs are local in the sense of \cite{chernoff1953}. It was pointed out by \cite{DetteMelasShpilev2013} that locally $T$-optimal designs may be very sensitive with respect to misspecification of $\overline \theta_1$.
In a further paper \cite{atkfed1975b} generalized their approach to construct optimal
 discriminating designs for more than $2$   competing  regression models and suggested the criterion
 \begin{align} \label{2.3}
T (\xi) = \min_{2 \leq j \leq \nu} T_{1j}(\xi)  =
\min_{2 \leq j \leq \nu}\inf_{\theta_{j} \in \Theta_j} \int_{\mathcal{X}} \Big[ \eta_1(x,\overline{\theta}_{1}) - \eta_j(x,\theta_{j}) \Big]^2 \xi(dx).
\end{align}
This criterion determines a ''good'' design for discriminating the model $\eta_1$ against $\eta_2, \ldots, \eta_\nu$, where
the parameter  $\overline{\theta}_1$ has the same meaning as before.  As  pointed out by \cite{tomlop2010} and
 \cite{BraessDette2013} there are many situations, where it is not clear which model should be considered as fixed and
 these authors proposed a symmetrized Bayesian (instead of minimax) version of the $T$-optimality criterion, that is
\begin{align} \label{2.4}
T_{\mathrm{P}}(\xi)  = \sum_{i,j=1}^{\nu} p_{i,j} T_{i,j}(\xi )  
  =  \sum_{i,j=1}^{\nu} p_{i,j} \inf_{\theta_{i,j} \in \Theta_j} \int_{\mathcal{X}} \Big[ \eta_i(x,\overline{\theta}_{i}) - \eta_j(x,\theta_{i,j}) \Big]^2 \xi(dx),
\end{align}
where the quantities $p_{i,j}$ denote nonnegative weights reflecting the importance of the comparison between the
the model $\eta_i$ and $\eta_j$.  We note again that this criterion requires the specification of the parameter $\overline{\theta}_{i}$,
whenever the corresponding weight  $p_{i,j}$ is positive.   Throughout this paper we will call a design maximizing one of the
criteria \eqref{2.2} - \eqref{2.4} locally $T$-optimal  discriminating design, where the specific criterion  under consideration is always clear
from the context. For some recent references discussing locally $T$-optimal discriminating designs we refer to
\cite{ucibog2005},
 \cite{loptomtra2007}, \citet{atkinson2008},  \cite{Tommasi09} or  \cite{BraessDette2013}  among many others. For the formulation of the first results we require the following assumptions.

 \begin{assumption}
\label{assum1}
For each   $i=1,\dots,\nu $ the functions  $\eta_i(\cdot ,\theta_{i})$ is  continuously differentiable with respect to the  parameter $\theta_{i} \in \Theta_i, $ .
\end{assumption}
\begin{assumption}\label{assum2}
For any design $\xi$   such that
 $T_\mathrm{P}(\xi)>0$  and weight $p_{i,j} \neq 0$ the infima in \eqref{2.4} are attained at a unique points  $ \widehat{\theta}_{i,j}  = \widehat{\theta}_{i,j}(\xi)$
in the interior of the set $\Theta_j$.
\end{assumption}

For a design $\xi$ we also introduce the notation
\begin{align}
{\Theta}_{i,j}^*(\xi)  =   \arginf_{\theta_{i,j} \in \Theta_j} \int_{\mathcal{X}}
 \big[ \eta_i(x,\overline{\theta}_i) - \eta_j(x, \theta_{i,j}) \big]^2 \xi(dx) ,
\label{thetamin}
\end{align}
which is used in the formulation of
 the following result.

\begin{thm} \label{thm1}
If Assumption \ref{assum1} is satisfied, then the design  $\xi^*$ is a locally  $T_\mathrm{P}$-optimal discriminating design, if and only if there
exist    distributions $\mu_{ij}^*$  on the  sets $ {\Theta}_{i,j}^*(\xi^*)$ defined in  \eqref{thetamin} such that the
inequality
\begin{equation} \label{equiv}
 \sum_{i,j = 1}^{\nu} p_{i,j}  \int_{{\Theta}_{i,j}^*(\xi^*)}  \big[ \eta_i(x,\overline{\theta}_{i}) - \eta_j(x,{\theta}_{i,j}) \big]^2 \mu_{ij}^* (d {\theta}_{i,j})
 ~\leq  T_{\mathrm{P}}(\xi^*)
 \end{equation}
 is satisfied for all
$ x \in \mathcal{X}$. Moreover, there is equality in \eqref{equiv}
 for all support points  of the  the  locally  $T_\mathrm{P}$-optimal discriminating design  $\xi^*$.
\end{thm}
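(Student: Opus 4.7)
My plan is to cast the characterization as a saddle-point equivalence. For each fixed $\theta$ the map $\xi\mapsto L_{ij}(\xi,\theta):=\int_\mathcal{X}[\eta_i(x,\overline\theta_i)-\eta_j(x,\theta)]^2\xi(dx)$ is linear, so each $T_{ij}$ is concave and upper semicontinuous as an infimum of continuous linear functionals on the weak-* compact space of designs; hence $T_P$ is concave. This concavity is what makes a Kiefer--Wolfowitz style equivalence theorem available.

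Next, I would lift the problem to a Lagrangian. Using
$$\inf_{\theta\in\Theta_j}L_{ij}(\xi,\theta)=\inf_{\mu_{ij}\in\mathcal{P}(\Theta_j)}\int L_{ij}(\xi,\theta)\,\mu_{ij}(d\theta),$$
and writing $\mu=(\mu_{ij})$ and $H(\xi,\mu)=\sum_{i,j}p_{ij}\int L_{ij}(\xi,\theta)\mu_{ij}(d\theta)$, one has $T_P(\xi)=\inf_\mu H(\xi,\mu)$ with $H$ bilinear. Under Assumption \ref{assum1}, $H$ is jointly weak-* continuous, and the measure spaces are weak-* compact since $\mathcal{X}$ and the $\Theta_j$ are compact. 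Sion's minimax theorem then yields
$$\sup_\xi\inf_\mu H(\xi,\mu)=\inf_\mu\sup_\xi H(\xi,\mu)$$
together with a saddle pair $(\xi^*,\mu^*)$ satisfying $H(\xi,\mu^*)\leq H(\xi^*,\mu^*)=T_P(\xi^*)\leq H(\xi^*,\mu)$ for all admissible $\xi,\mu$.

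The claimed equivalence now follows algebraically. Necessity: evaluating $H(\xi,\mu^*)\leq T_P(\xi^*)$ at the Dirac design $\xi_x$ is precisely inequality \eqref{equiv}, and the right saddle inequality together with $T_P(\xi^*)=\inf_\mu H(\xi^*,\mu)$ forces each $\mu_{ij}^*$ with $p_{ij}\neq 0$ to be concentrated on $\Theta_{ij}^*(\xi^*)$, as required. Sufficiency: given such $\mu^*$, for any design $\xi$,
$$T_P(\xi)\leq H(\xi,\mu^*)=\int_\mathcal{X}F(x,\mu^*)\,\xi(dx)\leq T_P(\xi^*),$$
where $F(x,\mu^*)$ denotes the left-hand side of \eqref{equiv}, so $\xi^*$ is $T_P$-optimal. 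Equality at the support of $\xi^*$ comes from integrating $F(\cdot,\mu^*)\leq T_P(\xi^*)$ against $\xi^*$: the integral equals $H(\xi^*,\mu^*)=T_P(\xi^*)$, forcing equality $\xi^*$-almost surely and hence at every support point.

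The main obstacle is the careful verification of the hypotheses of Sion's theorem in this infinite-dimensional setting---weak-* compactness of the measure spaces and the joint continuity of $H$ in $(\xi,\mu)$---and the measure-theoretic argument that $\mu^*$ is actually supported on the argmin set. Both are standard consequences of Assumption \ref{assum1} and the compactness of $\mathcal{X}$ and the $\Theta_j$, but the bookkeeping deserves care; the remainder of the proof is a short algebraic extraction from the saddle identity.
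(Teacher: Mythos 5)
Your argument is correct, and you should know that the paper itself does not write out a proof of Theorem \ref{thm1}: it states that the result extends the corresponding theorem in \cite{BraessDette2013} and omits the details. Your saddle-point route via Sion's minimax theorem is essentially the same device the paper uses in its appendix proofs of Theorem \ref{thm2.2} and Corollary \ref{thm1a}, where $T_{\mathrm{P}}(\xi)=\inf_\mu\int_{\mathcal X}\int_{\Theta^*}\varphi(x,\theta)\,\mu(d\theta)\,\xi(dx)$ over product measures of the form \eqref{prodmeas} and the ``well known minimax theorem'' is invoked to interchange $\sup_\zeta$ and $\inf_\mu$. What your write-up supplies beyond the paper's sketch is the explicit extraction of both implications and of the equality on the support from the saddle identity, together with the correct observation that the right saddle inequality forces each $\mu^*_{ij}$ with $p_{i,j}>0$ to concentrate on the argmin set ${\Theta}^*_{i,j}(\xi^*)$; since $L_{ij}(\xi^*,\cdot)$ is continuous and bounded below by its infimum, a minimizing measure must indeed live on the set where the infimum is attained. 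Two small points deserve care but are not gaps: (i) bilinear forms on products of weak-$*$ compacta are in general only separately continuous, yet separate semicontinuity plus compactness of one factor is all Sion's theorem requires, and here joint continuity actually holds because the kernel $\varphi(x,\theta)$ is jointly continuous on a compact product; (ii) your ``$\xi^*$-almost sure'' equality upgrades to equality at every support point precisely because designs are finitely supported, so each support point carries positive mass.
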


Theorem \ref{thm1} provides an extension of the corresponding theorem in ~\cite{BraessDette2013}, and the proof is similar and therefore omitted.
For  designs $\xi, \zeta$  on $\mathcal {X}$  we introduce the function
\begin{align} \label{qfct}
Q(\zeta,\xi)  = \int_\mathcal{X} \sum_{i,j = 1}^{\nu} p_{i,j}\inf_{\theta_{i,j} \in  \Theta_{ij}^*(\xi)}  \Big[ \eta_i(x,\overline{\theta}_{i}) - \eta_j(x,{\theta}_{i,j}) \Big]^2 \zeta (dx) ,
\end{align}
where $\zeta$ is an experimental design and the set $\Theta_{ij}^*(\xi)$ is defined in \eqref{thetamin}.  Using Lemma \ref{lemma1}  from the appendix it is easy to check that
\begin{align*}
\frac{\partial T_{\mathrm{P}}(\xi(\alpha))}{\partial\alpha}\Big|_{\alpha=0}=
Q(\zeta, \xi)-T_{\mathrm{P}}(\xi) 
\end{align*}
where $
\xi(\alpha)=(1-\alpha) \xi + \alpha \zeta $ denotes the  convex combination of the designs $\xi$ and  $\zeta$. If Assumption \ref{assum2} is satisfied, the function $Q$ simplifies to
\begin{align} \label{psi}
Q(\zeta,\xi)  &=  \int_{\mathcal{X}} \sum_{i,j = 1}^{\nu} p_{i,j}   \big[ \eta_i(x,\overline{\theta}_{i}) - \eta_j(x,  \widehat{\theta}_{i,j}  ) \big]^2 \zeta  (dx) , \nonumber
\end{align}
which  plays an important role in the subsequent discussion. In particular we need also the following extension of Theorem \ref{thm1}.

\begin{thm}\label{thm2.2}
If Assumption \ref{assum1} is satisfied and the design  $\xi$ is not $T_\mathrm{P}$-optimal, then there exists a design $\zeta^*$, such that the inequality
$Q(\zeta^*,\xi)> T_{\mathrm{P}}(\xi)$ holds.
\end{thm}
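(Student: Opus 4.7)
The plan is to exploit concavity of $T_{\mathrm{P}}$ together with the directional derivative identity
\[
\frac{\partial T_{\mathrm{P}}(\xi(\alpha))}{\partial\alpha}\Big|_{\alpha=0}\;=\;Q(\zeta,\xi)-T_{\mathrm{P}}(\xi)
\]
stated right after \eqref{qfct}. Since $\xi$ is assumed not to be $T_{\mathrm{P}}$-optimal, the existence of an improving direction (in the sense that this derivative is strictly positive) will immediately produce the candidate $\zeta^{*}$.

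First I would verify that $T_{\mathrm{P}}$ is concave on the set of probability measures on $\mathcal{X}$. For every pair $(i,j)$ and every fixed $\theta_{i,j}\in\Theta_{j}$, the map $\xi\mapsto \int_{\mathcal{X}}\bigl[\eta_{i}(x,\overline{\theta}_{i})-\eta_{j}(x,\theta_{i,j})\bigr]^{2}\xi(dx)$ is linear in $\xi$, so $T_{i,j}(\xi)$ is an infimum of linear functionals and hence concave; the nonnegative linear combination defining $T_{\mathrm{P}}$ in \eqref{2.4} preserves concavity. Next, because $\xi$ is not optimal there exists a design $\eta^{*}$ with $T_{\mathrm{P}}(\eta^{*})>T_{\mathrm{P}}(\xi)$. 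Setting $\xi(\alpha)=(1-\alpha)\xi+\alpha\eta^{*}$ and applying concavity gives
\[
\frac{T_{\mathrm{P}}(\xi(\alpha))-T_{\mathrm{P}}(\xi)}{\alpha}\;\geq\;T_{\mathrm{P}}(\eta^{*})-T_{\mathrm{P}}(\xi)\;>\;0
\]
for every $\alpha\in(0,1]$. Letting $\alpha\downarrow 0$ and invoking the derivative identity yields $Q(\eta^{*},\xi)-T_{\mathrm{P}}(\xi)\geq T_{\mathrm{P}}(\eta^{*})-T_{\mathrm{P}}(\xi)>0$, so the choice $\zeta^{*}=\eta^{*}$ satisfies the required inequality.

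The only delicate point is the existence and the value of the one-sided derivative at $\alpha=0$, which is exactly what the identity preceding \eqref{psi} (derivable from Lemma \ref{lemma1}) supplies; note that this identity is obtained under Assumption \ref{assum1} alone and does not require the uniqueness hypothesis Assumption \ref{assum2}, so it is available under the hypotheses of the theorem. Beyond that, the argument is a standard concave-maximization routine on the convex set of approximate designs, and I expect no further obstacle.
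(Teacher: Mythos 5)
Your argument is correct in substance but follows a genuinely different route from the paper. You run the classical concave-programming argument: $T_{\mathrm{P}}$ is concave as a nonnegative combination of infima of linear functionals of $\xi$, so along the segment from a non-optimal $\xi$ to any better design $\eta^*$ the difference quotients are bounded below by $T_{\mathrm{P}}(\eta^*)-T_{\mathrm{P}}(\xi)>0$, and the one-sided derivative at $\alpha=0$ --- identified with $Q(\eta^*,\xi)-T_{\mathrm{P}}(\xi)$ via the formula following \eqref{qfct} --- is therefore positive. The paper instead goes through the equivalence Theorem \ref{thm1}: it integrates the characterizing inequality \eqref{equiv} against an arbitrary $\zeta$ and uses a minimax interchange over product measures $\mu$ on $\Theta^*(\xi^*)$ to show that optimality is equivalent to $\sup_\zeta Q(\zeta,\xi)\le T_{\mathrm{P}}(\xi)$, whence non-optimality flips the inequality. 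Your version is more elementary and self-contained (no equivalence theorem, no minimax theorem, and it exhibits $\zeta^*$ explicitly as any design beating $\xi$); the paper's version buys the two-sided equivalence between the pointwise condition of Theorem \ref{thm1} and the condition on $\sup_\zeta Q$, which is what gets reused in Corollary \ref{thm1a} and in the convergence proof of Theorem \ref{thm2}.

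One point you should make explicit: your chain needs $Q(\eta^*,\xi)-T_{\mathrm{P}}(\xi)$ to equal, or at least dominate, the one-sided derivative. Lemma \ref{lemma1} gives that derivative as $\sum_{i,j}p_{i,j}\min_{\theta_{i,j}\in\Theta^*_{i,j}(\xi)}\int\bigl[\eta_i(x,\overline\theta_i)-\eta_j(x,\theta_{i,j})\bigr]^2\zeta(dx)-T_{\mathrm{P}}(\xi)$, with the minimum taken \emph{outside} the integral. Reading \eqref{qfct} literally, with the infimum inside the integral, one only has $\min\int\ \ge\ \int\min$, i.e.\ the derivative dominates $Q-T_{\mathrm{P}}$ --- the wrong direction for your conclusion --- unless the minimizing sets are singletons (Assumption \ref{assum2}) or one adopts the infimum-outside reading of $Q$ that the paper itself uses in the proof of Corollary \ref{thm1a}. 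Since Theorem \ref{thm2.2} is stated under Assumption \ref{assum1} alone, you should either fix the reading of $Q$ you are using or add a line justifying the interchange; with that caveat the proof is sound.
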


In order to obtain a more manageable condition of this result  let $\hat \mu_{i,j}(\xi)$
denote  a  measure on the set ${\Theta}_{i,j}^*(\xi)$ ( $i,j=1,\ldots,\nu$) for which the function
$$
\max_{x \in \mathcal{X}} \sum_{i,j = 1}^{\nu} p_{i,j}\int_{{\Theta}_{i,j}^*(\xi)}  \big[ \eta_i(x,\overline{\theta}_{i}) - \eta_j(x,{ \theta}_{i,j}) \big]^2 \mu_{i,j}(d\theta_{i,j})
$$
attains its minimal value, and define
\begin{equation} \label{Psifct}
\Psi(x,\xi)= \sum_{i,j = 1}^{\nu} p_{i,j}  \int_{{\Theta}_{i,j}^*(\xi)}  \big[ \eta_i(x,\overline{\theta}_{i}) - \eta_j(x,{\theta}_{i,j}) \big]^2 \hat \mu_{ij}(d {\theta}_{i,j}) ~.
\end{equation}
 Note that   the function  in  \eqref{Psifct} simplifies
 to
 \begin{equation} \label{Psifctsimp}
\Psi(x,\xi)= \sum_{i,j = 1}^{\nu} p_{i,j}    \big[ \eta_i(x,\overline{\theta}_{i}) - \eta_j(x,{\widehat \theta}_{i,j}) \big]^2  ~,
\end{equation}
if  both Assumptions  \ref{assum1} and  \ref{assum2} are satisfied.
\begin{cor}\label{thm1a}
If Assumption  \ref{assum1}    is satisfied and the design  $\xi$ is not $T_\mathrm{P}$-optimal then there exists
a point  $\overline x \in \mathcal{X} $ such that
\begin{align} \nonumber
\Psi(\overline x,\xi)> T_{\mathrm{P}}(\xi).
\end{align}
\end{cor}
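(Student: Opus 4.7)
The plan is to deduce Corollary \ref{thm1a} directly from Theorem \ref{thm2.2} by a pointwise comparison between the integrand defining $Q(\zeta,\xi)$ and the function $\Psi(x,\xi)$.

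First I would invoke Theorem \ref{thm2.2} to obtain a design $\zeta^*$ with $Q(\zeta^*,\xi) > T_{\mathrm{P}}(\xi)$. Writing out $Q$ as an integral against $\zeta^*$, this says
\begin{equation*}
\int_{\mathcal{X}} \phi(x)\, \zeta^*(dx) > T_{\mathrm{P}}(\xi),
\qquad
\phi(x) := \sum_{i,j=1}^{\nu} p_{i,j} \inf_{\theta_{i,j}\in \Theta_{i,j}^*(\xi)} \bigl[\eta_i(x,\overline{\theta}_i)-\eta_j(x,\theta_{i,j})\bigr]^2.
\end{equation*}
Since $\zeta^*$ is a probability measure with finite support, at least one of its support points, call it $\overline x$, must satisfy $\phi(\overline x) > T_{\mathrm{P}}(\xi)$; otherwise the integral above would be bounded by $T_{\mathrm{P}}(\xi)$.

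Next I would compare $\phi(\overline x)$ to $\Psi(\overline x,\xi)$. For every fixed $x$ and every pair $(i,j)$, the integrand $[\eta_i(x,\overline{\theta}_i)-\eta_j(x,\theta_{i,j})]^2$ is bounded below on $\Theta_{i,j}^*(\xi)$ by its own infimum over that set, so integration against the probability measure $\hat\mu_{i,j}(\xi)$ preserves the inequality. Multiplying by the nonnegative weights $p_{i,j}$ and summing gives $\Psi(x,\xi) \ge \phi(x)$ for every $x\in \mathcal{X}$. Evaluating at $\overline x$ then yields $\Psi(\overline x,\xi) \ge \phi(\overline x) > T_{\mathrm{P}}(\xi)$, which is the claim.

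The only delicate ingredient is Theorem \ref{thm2.2} itself, which has already been established; the remaining argument is the pointwise bound $\Psi\ge \phi$. I do not expect a genuine obstacle, since I am not using the defining minimax property of $\hat\mu_{i,j}(\xi)$ beyond the fact that it is a probability measure on $\Theta_{i,j}^*(\xi)$. Under Assumption \ref{assum2} the sets $\Theta_{i,j}^*(\xi)$ are singletons, $\Psi$ reduces to \eqref{Psifctsimp}, and $\phi(x)=\Psi(x,\xi)$, so the conclusion follows immediately; Assumption \ref{assum1} alone is all that is needed for the general argument, because it ensures the integrals and infima in the definitions of $Q$ and $\Psi$ are meaningful.
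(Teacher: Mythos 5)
Your proof is correct, and it takes a genuinely more elementary route than the paper's. The paper proves the corollary by running the minimax machinery once more: it establishes the identity $\max_{x\in\mathcal{X}}\Psi(x,\xi)=\sup_{\zeta}Q(\zeta,\xi)$ (this uses the fact that $\hat\mu_{i,j}(\xi)$ is, by definition, the \emph{minimizing} choice of measures, together with compactness of $\mathcal{X}$ and of $\Theta^{*}(\xi)$), and then applies Theorem \ref{thm2.2}. You instead use only the one-sided pointwise bound $\Psi(x,\xi)\ge\phi(x)$, where $\phi$ is your notation for the integrand of $Q(\cdot,\xi)$: the integral of a nonnegative function against a probability measure supported on $\Theta_{i,j}^{*}(\xi)$ dominates its infimum over that set, and an integral of $\phi$ against the probability measure $\zeta^{*}$ exceeding $T_{\mathrm{P}}(\xi)$ forces at least one support point where $\phi$ does too. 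This avoids the minimax theorem entirely and, as you observe, proves slightly more: the conclusion holds with $\hat\mu_{i,j}(\xi)$ replaced by \emph{any} probability measures on the sets $\Theta_{i,j}^{*}(\xi)$, since the defining extremal property of $\hat\mu_{i,j}(\xi)$ is never used. What the paper's route buys in exchange is the exact identity $\max_{x}\Psi(x,\xi)=\sup_{\zeta}Q(\zeta,\xi)$ together with the existence of a maximizer $\zeta^{*}$ supported at the local maxima of $\Psi(\cdot,\xi)$; both facts are reused later (notably in the proof of Theorem \ref{thm2}, where $\zeta_{j}$ is chosen among designs supported at the local maxima of $\Psi(\cdot,\xi_{s_j})$), so the minimax computation is not wasted even though the corollary itself does not require it. The one point you should make explicit is that $\hat\mu_{i,j}(\xi)$ must have total mass one for the step ``integration preserves the lower bound by the infimum''; the paper calls these objects ``measures'' in the display preceding \eqref{Psifct} but ``distributions'' in Theorem \ref{thm1}, and your argument needs the latter reading.
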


%-------------------------------------------
\subsection{Bayesian $T$-optimal designs} \label{sec2b}
%-------------------------------------------

 As pointed out by   \cite{detmelshp2012} locally $T$-optimal designs are rather sensitive with respect to misspecification of the unknown
 parameters  $\overline{\theta}_{i}$, and it might be appropriate to construct more robust
 designs for model discrimination.  The problem of robustness was already mentioned in \cite{atkfed1975a}
 and these authors proposed a Bayesian version of the $T$-optimality criterion which reads  in the situation of
 the  criterion  \eqref{2.4} as
 follows
 \begin{align} \label{2.5}
T_{\mathrm{P}}^{\mathrm{B}}(\xi) = \sum_{i,j=1}^{\nu} p_{i,j} \int_{\Theta_i} \inf_{\theta_{i,j} \in \Theta_j} \int_{\mathcal{X}} \Big[ \eta_i(x,\lambda_i) - \eta_j(x,\theta_{i,j}) \Big]^2 \xi(dx) \mathcal{P}_{i}(d\lambda_i).
\end{align}
Here for each $i=1, \ldots, \nu$ the measure  $\mathcal{P}_{i}$ denotes a prior distribution for the parameter $\theta_{i}$  in model
$\eta_i$, such that all integrals in \eqref{2.5} are well defined.  Throughout this paper we will call any design maximizing the criterion
\eqref{2.5} a Bayesian $T$-optimal discriminating design. For (two) polynomial regression models Bayesian $T$-optimal discriminating
designs have been explicitly determined by \cite{DetteMelasShpilev2013}, and their results indicate the
intrinsic difficulties in the construction
of optimal designs with respect to this criterion. \\
 In the following we will link the criterion \eqref{2.5} to the locally
$T$-optimality criterion \eqref{2.4} for large number of competing models.
For this purpose we note that in  nearly all  situations of practical interest an explicit evaluation of the integral in \eqref{2.5} is not possible and the
criterion has to be evaluated by numerical integration approximating  the prior distribution by a measure with finite support.
Therefore  we assume that the prior distribution $ \mathcal{P}_{i}$ in the criterion is given  by a discrete measure with
masses $\tau_{i1}, \ldots \tau_{i\ell_i}$ at the points $\lambda_{i1},\ldots ,\lambda_{i\ell_i}$. The criterion in \eqref{2.5} can
then  be  rewritten as
 \begin{align} \label{2.6}
T_{\mathrm{P}}^{\mathrm{B}}(\xi) = \sum_{i,j=1}^{\nu} \sum_{k=1}^{\ell_i} p_{i,j}  \tau_{ik} \inf_{\theta_{i,j} \in \Theta_j} \int_{\mathcal{X}} \big[ \eta_i(x,\lambda_{ik}) - \eta_j(x,\theta_{i,j}) \big]^2 \xi(dx) ,
\end{align}
which is a locally $T$-optimality criterion of the from \eqref{2.4}. The only difference between the criterion obtained form the Bayesian approach
and  \eqref{2.4} consists in the fact that the criterion \eqref{2.6} involves substantially more comparisons of the functions $\eta_i$ and $\eta_j$.
For example, if this approach  is used for a Bayesian version of the criterion \eqref{2.2} we obtain
\begin{align} \label{2.7}
T_{12}^B (\xi) = \sum_{k=1}^{\ell}   \tau_{k}
\inf_{\theta_{2} \in \Theta_2} \int_{\mathcal{X}} \big[ \eta_1(x, \lambda_k) - \eta_2(x,\theta_{2}) \big]^2 \xi(dx).
\end{align}
This is the locally $T$-optimality criterion \eqref{2.4} with $\nu=\ell +1,\ p_{i,\ell +1}=\tau_i \ (i=1,\dots,\ell)$ and $p_{i,j}=0$ otherwise.
Thus, instead of making only one comparison as required for the locally $T$-optimality criterion, the Bayesian approach
(with a discrete approximation of the prior) yields a criterion with $\ell $ comparisons, where $\ell$ denotes the number of support points
used for the approximation of the prior distribution. Moreover, for each support point of the prior distribution in the criterion \eqref{2.6} (or \eqref{2.7}) the infimum has to be calculated numerically, which is computationally expensive. Consequently, the computation of Bayesian $T$-optimal discriminating
design problems is particularly challenging. In the following sections we provide an efficient solution of  this problem.

%---------------------------------------------
\section{Calculating locally $T$-optimal designs }\label{sec3}
%---------------------------------------------

\cite{BraessDette2013} proposed an algorithm for the numerical construction of
locally $T$-optimal designs,  which is based on vector-valued  Chebyshev approximation. This algorithm is quite difficult both
in terms of description and  implementation. Moreover, it requires   substantial computational resources and is therefore only able
to deal  with a small number of comparisons in the $T$-optimality criterion. The purpose of this section is to develop
a more efficient method which is able to deal with a large number of comparisons in the the criterion and  avoids the  drawbacks of the procedures
in \cite{atkfed1975a} and \cite{BraessDette2013}. As pointed out in Section \ref{sec2b}
methods solving this problem are required for the calculation of Bayesian $T$-optimal discriminating designs.

  Recall the definition of the function $\Psi$ in \eqref{Psifct} and  note that under Assumption  \ref{assum1}   it follows from
Corollary  \ref{thm1a}
that there exists  a point  $\overline{x} \in \mathcal{X}$, such that the inequality
$$
\Psi(\overline{x},\xi) > T_{\mathrm{P}}(\xi)
$$
holds, whenever  $\xi$ is {\bf not}  a locally $T$-optimal discriminating design. The algorithm of
\cite{atkfed1975a} uses this property to construct a sequence of designs which converges to the locally  $T$-optimal discriminating design.
 For further reference it is stated here.

\begin{algorithm}[\cite{atkfed1975a}]{\ }
\label{algorithm:AtkinsonFedorov}
Let $ \xi_0$ denote a given (starting) design and let $( \alpha_s)_{s=0}^{\infty}$   be  a sequence of positive numbers, such that
 $\lim_{s \to \infty} \, \alpha_s = 0, \; \sum_{s = 0}^{\infty} \alpha_s = \infty, \; \sum_{s = 0}^{\infty} \alpha_s^2 < \infty. $
For $s=0,1,\ldots $ define
$$\xi_{s+1} = ( 1 - \alpha_s ) \xi_s + \alpha_s \xi(x_{s+1}),$$
 where
$ x_{s+1} = \argmax_{x \in \mathcal{X}} \Psi(x,\xi_s).$
\end{algorithm}
\noindent
It can be shown that this algorithm converges in the sense that
$\lim_{s \to \infty} T_{\mathrm{P}}(\xi_s) =  T_{\mathrm{P}}(\xi^*)$, where $\xi^*$ denotes a locally $T$-optimal discriminating design.
However,  a  major problem of Algorithm \ref{algorithm:AtkinsonFedorov} is that it yields a sequence of designs
with an  increasing number  of  support points. As a consequence  the resulting design (after applying some
stopping criterion)  is concentrated on a large set of points.
Even  if this  problem can be solved  by clustering or  by determining the extrema of
the final function $\Psi(x,\xi_s)$, it is much more difficult to deal with the accumulation of  support points  during the iteration.
Moreover, \cite{BraessDette2013} demonstrated that in many cases
the iteration process may take several hundred iterations for obtaining a
locally $T$- optimal discriminating design with a required precision,  resulting in a
 high computational complexity for the recalculation of the optimum values
\begin{align}\label{thetatilde}
 \widehat{\theta}_{i,j} \in \arginf_{\theta_{i,j} \in \Theta_j} \int_{\mathcal{X}} \big[ \eta_i(x,\overline{\theta}_i) - \eta_j(x, \theta_{i,j}) \big]^2 \xi(dx)
\end{align}
in the  optimality criterion \eqref{2.4}. These authors also showed that   Algorithm \ref{algorithm:AtkinsonFedorov}
may not find the optimal design if there are too many model comparisons involved in the
$T$-optimality criterion \eqref{2.4}. \\
Therefore, we propose the following  alternative basic procedure for the calculation of locally $T$-optimal discriminating designs
as an alternative to Algorithm~\ref{algorithm:AtkinsonFedorov}. Roughly speaking, it consists of two steps treating the maximization with respect to support points (Step 1) and weights (Step 2)
separately, where two methods implementing  the second step will be given below [see Section \ref{sec31} and \ref{sec32} for details].
\begin{algorithm}{\ }
\label{algorithm:new}{\rm
Let ${\xi_0}$ denote a starting design such that  $T_{\mathrm{P}}({\xi_0})>0$ and define recursively
a sequence of designs $({\xi_s})_{s=0,1,\ldots } $  as follows:
\begin{itemize}
\item[$(1)$]  Let $\mathcal{S}_{[s]}$ denote the support of the design ${\xi_s}$. Determine
the set $\mathcal{E}_{[s]}$ of all local maxima of  the function $\Psi(x,{\xi_s})$ on the design space
 $\mathcal{X}$ and define
$\mathcal{S}_{[s+1]} =  \mathcal{S}_{[s]}  \cup \mathcal{E}_{[s]}$.
\item[$(2)$] We define $ \xi = \{\mathcal{S}_{[s+1]},\omega\} $ as the design supported at $ \mathcal{S}_{[s+1]}$ (with a vector $w$ of weights) and determine the locally $T_P$-optimal
design in the class of all  designs supported at $ \mathcal{S}_{[s+1]}$, that is we determine
the vector $\omega_{[s+1]}$ maximizing the function
\begin{align*}
g(\omega) = T_{\mathrm{P}}( \{\mathcal{S}_{[s+1]},\omega\}) = \sum_{i,j=1}^{\nu} p_{i,j} \inf_{\theta_{i,j} \in \Theta_j}
\sum_{x \in  \mathcal{S}_{[s+1]}} \big[ \eta_i(x,\overline{\theta}_{i}) - \eta_j(x,\theta_{i,j}) \big]^2 w_x
\end{align*}
(here $w_x$ denotes the weights at the point $x \in \mathcal{S}_{s+1}$).
 All points in ${\mathcal{S}}_{[s+1]}$ with vanishing components in the vector of weights   $\omega_{[s+1]}$  will be
 be removed and the new set of support points will also be denoted by ${\mathcal{S}}_{[s+1]}$.
 Finally the design ${\xi}_{s+1}$ is defined as the design with the set of support points  ${\mathcal{S}}_{[s+1]}$ and the corresponding nonzero weights.
\end{itemize}
}
\end{algorithm}

\noindent
\begin{thm}\label{thm2}
Let Assumption \ref{assum1} be satisfied and
let $({\xi_s})_{s=0,1,\ldots } $ denote the sequence of designs obtained by  Algorithm \ref{algorithm:new},
 then $$\lim_{s \to \infty} T_{\mathrm{P}}(\xi_{s+1}) =  T_{\mathrm{P}}(\xi^*),$$
  where $\xi^*$ denotes a   locally $T$-optimal discriminating  design.
\end{thm}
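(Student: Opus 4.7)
The plan is a monotone-convergence plus contradiction argument that exploits continuity of $T_{\mathrm{P}}$ together with the optimality condition of Corollary~\ref{thm1a}. First, I would establish that because $\mathcal{S}_{[s]} \subseteq \mathcal{S}_{[s+1]}$, the design $\xi_s$ itself is a feasible candidate in the weight optimization of Step~2, so $T_{\mathrm{P}}(\xi_{s+1}) \geq T_{\mathrm{P}}(\xi_s)$. The sequence $(T_{\mathrm{P}}(\xi_s))$ is therefore monotone nondecreasing and bounded above by $T_{\mathrm{P}}(\xi^*)$, so it converges to a limit $L \leq T_{\mathrm{P}}(\xi^*)$.

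Next, by compactness of the space of probability measures on the compact design region $\mathcal{X}$, I would extract a subsequence $\xi_{s_k}$ converging weakly to a design $\bar\xi$. Under Assumption~\ref{assum1} and the compactness of each $\Theta_j$, Berge's maximum theorem applied to the jointly continuous map $(\theta,\xi) \mapsto \int_{\mathcal{X}} [\eta_i(x,\overline{\theta}_i) - \eta_j(x,\theta)]^2 \, \xi(dx)$ shows that each $T_{i,j}$, and hence $T_{\mathrm{P}}$, is continuous in $\xi$; consequently $T_{\mathrm{P}}(\bar\xi) = L$. Suppose for contradiction that $L < T_{\mathrm{P}}(\xi^*)$, so $\bar\xi$ is not $T_{\mathrm{P}}$-optimal. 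Corollary~\ref{thm1a} then supplies $\bar x \in \mathcal{X}$ with $\Psi(\bar x,\bar\xi) > L$, and by passing to a further subsequence the global maximizers $\bar x_{s_k}$ of $\Psi(\cdot,\xi_{s_k})$, which lie in $\mathcal{E}_{[s_k]} \subset \mathcal{S}_{[s_k+1]}$, converge to some $\hat x$ that is itself a global maximizer of $\Psi(\cdot,\bar\xi)$, so $\Psi(\hat x,\bar\xi) \geq \Psi(\bar x,\bar\xi) > L$.

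The key final step is that because $\xi_{s_k}$ and $\delta_{\bar x_{s_k}}$ are both supported in $\mathcal{S}_{[s_k+1]}$, Step~2 yields $T_{\mathrm{P}}(\xi_{s_k+1}) \geq f_{s_k}(\alpha)$ for every $\alpha \in [0,1]$, where $f_{s_k}(\alpha) := T_{\mathrm{P}}((1-\alpha)\xi_{s_k} + \alpha\,\delta_{\bar x_{s_k}})$. Each $f_{s_k}$ is concave with $f_{s_k}(0) = T_{\mathrm{P}}(\xi_{s_k})$, and by continuity of $T_{\mathrm{P}}$ the functions $f_{s_k}$ converge pointwise to the concave limit $\bar f(\alpha) := T_{\mathrm{P}}((1-\alpha)\bar\xi + \alpha\,\delta_{\hat x})$, which satisfies $\bar f(0) = L$ and $\bar f{}'(0^+) > 0$ (via Lemma~\ref{lemma1}). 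Concavity together with a strictly positive right derivative at the origin produce some $\alpha^* \in (0,1]$ with $\bar f(\alpha^*) > L$; pointwise convergence then gives $f_{s_k}(\alpha^*) \to \bar f(\alpha^*) > L$, so $T_{\mathrm{P}}(\xi_{s_k+1}) > L$ for all sufficiently large $k$, contradicting $T_{\mathrm{P}}(\xi_s) \leq L$. Hence $L = T_{\mathrm{P}}(\xi^*)$.

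The main obstacle is reconciling the function $\Psi$ used in Step~1 of the algorithm with the $Q$-function that actually governs the directional derivative $\bar f{}'(0^+)$. Under Assumption~\ref{assum2} the unique-minimizer hypothesis makes $\Psi(x,\xi)$ coincide with $Q(\delta_x,\xi)$, so the point $\hat x$ singled out by the algorithm automatically yields a strictly positive directional derivative and the argument above is clean. Without Assumption~\ref{assum2} one has only the weaker inequality $\Psi(x,\xi) \geq Q(\delta_x,\xi)$, and the conclusion $\bar f{}'(0^+) > 0$ must instead be extracted from Theorem~\ref{thm2.2}: the design $\zeta^*$ it provides decomposes over Diracs, at least one of which, say $\delta_{\tilde x}$, satisfies $Q(\delta_{\tilde x},\bar\xi) > L$, and one then needs to check that such a $\tilde x$ is captured by the local maxima of $\Psi(\cdot,\xi_{s_k})$ for all large $k$.
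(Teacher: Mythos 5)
Your overall architecture matches the paper's: monotonicity of $T_{\mathrm{P}}(\xi_s)$ from the nested supports $\mathcal{S}_{[s]}\subseteq\mathcal{S}_{[s+1]}$, extraction of a convergent subsequence with limit $\bar\xi$, and a contradiction obtained by exhibiting a design of the form $(1-\alpha)\xi_{s_k}+\alpha\zeta$ that is feasible for Step~2 and strictly improves beyond $L$. Your ending differs from the paper's in a harmless and arguably cleaner way: you use concavity of $T_{\mathrm{P}}$ along the segment plus pointwise convergence of $f_{s_k}$ to conclude $T_{\mathrm{P}}(\xi_{s_k+1})>L$ directly, whereas the paper performs a Taylor expansion with a uniform second-derivative bound $K$ to get a per-step gain of $\delta^2/(2K)$ and then sums a divergent series. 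Either closing works once a direction with strictly positive directional derivative is in hand.

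The genuine gap is the one you flag yourself, and your proposed repair does not close it. The theorem is stated under Assumption~\ref{assum1} only, so the sets $\Theta_{i,j}^*(\bar\xi)$ need not be singletons. The directional derivative of $T_{\mathrm{P}}$ at $\bar\xi$ in the direction of a design $\zeta$ is $\sum_{i,j}p_{i,j}\min_{\theta_{i,j}\in\Theta_{i,j}^*(\bar\xi)}\int[\eta_i-\eta_j]^2\,\zeta(dx)-T_{\mathrm{P}}(\bar\xi)$, i.e.\ $Q(\zeta,\bar\xi)-T_{\mathrm{P}}(\bar\xi)$ with the infima sitting \emph{outside} the integral. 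This functional is concave, not linear, in $\zeta$, so the decomposition of $\zeta^*$ over Diracs fails: it is entirely possible that $Q(\zeta^*,\bar\xi)>L$ while $Q(\delta_x,\bar\xi)\leq L$ for \emph{every} $x$ in the support of $\zeta^*$ (take two minimizers $\theta^{(1)},\theta^{(2)}\in\Theta_{i,j}^*(\bar\xi)$ whose squared deviations are large at complementary points; each Dirac then sees the minimum, which is small, while a mixture does not). For the same reason $\Psi(\hat x,\bar\xi)>L$ does not give $\bar f{}'(0^+)>0$, since only $Q(\delta_{\hat x},\bar\xi)\leq\Psi(\hat x,\bar\xi)$ holds. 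The paper's resolution is precisely the construction in the proof of Corollary~\ref{thm1a}: by the minimax argument there exists a measure $\zeta_j$ supported on the \emph{set of all local maxima} of $\Psi(\cdot,\xi_{s_j})$ with $Q(\zeta_j,\xi_{s_j})=\sup_\zeta Q(\zeta,\xi_{s_j})=\max_x\Psi(x,\xi_{s_j})$; since Step~1 of the algorithm adds every local maximum to $\mathcal{S}_{[s_j+1]}$, the mixture $(1-\alpha)\xi_{s_j}+\alpha\zeta_j$ is feasible for Step~2 and has directional derivative at least $\delta$. This is also the conceptual reason the algorithm must add all local maxima rather than only the global one. To complete your proof you should replace $\delta_{\bar x_{s_k}}$ by this measure $\zeta_k$ (and establish the lower semi-continuity of $\xi\mapsto\sup_\zeta Q(\zeta,\xi)$ along the subsequence, as the paper does); the rest of your argument then goes through.
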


A proof of Theorem \ref{thm2} is deferred to  Section \ref{sec6}.
Note that the algorithm adds all local maxima of the function $\Psi(x,{\xi_s})$ as possible support points of the design in the next  iteration.
Consequently, in  the current form Algorithm~\ref{algorithm:new} also  accumulates  too many
support points. To avoid this problem, it is suggested to remove at each step those
points from the support, whenever their weight is smaller than $m^{0.25}$, where $m$ denote the working
precision of the software used in the implementation  (which is $2.2\times 10^{-16} $ for $R$).
 Note also that this  refinement does not affect  the convergence of the algorithm  from a practical point of view.
 A more important question is the implementation
 of the second step of the procedure, that is   the maximization of function $g(\omega)$. Before we
 discuss two computationally
 efficient procedures for this purpose in the following sections, we
state an important property
of the function $\Psi(x,\xi_{s+1})$ obtained  in each iteration.
\begin{lemma}\label{lem2}
At the end of each iteration of Algorithm \ref{algorithm:new}  the function
$
\Psi(x,\xi_{s+1})
$
attains one and the same value for all support points of the design $ \xi_{s+1}$.
\end{lemma}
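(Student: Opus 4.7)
My plan is to exploit the fact that $\xi_{s+1}$ is, by construction of Step~2 of Algorithm~\ref{algorithm:new}, the maximizer of $T_{\mathrm{P}}$ on the finite-dimensional simplex of probability measures supported on $\mathcal{S}_{[s+1]}$, and then to read off the first-order optimality conditions along the edges of this simplex.

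First I would fix any two distinct support points $x_1,x_2$ of $\xi_{s+1}$; after the clean-up in Step~2 both carry strictly positive weight. For all sufficiently small $|\epsilon|$, the signed perturbation
$$\xi_\epsilon = \xi_{s+1} + \epsilon(\delta_{x_1} - \delta_{x_2})$$
is still a probability measure supported on $\mathcal{S}_{[s+1]}$. Since $\xi_{s+1}$ maximizes $T_{\mathrm{P}}$ over that class, $\epsilon=0$ is an interior maximum of the scalar function $\epsilon \mapsto T_{\mathrm{P}}(\xi_\epsilon)$, so its derivative at $\epsilon=0$ must vanish.

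Second I would evaluate this derivative by an envelope argument. Under Assumption~\ref{assum2} the minimizer $\widehat\theta_{i,j}(\xi_\epsilon)$ exists uniquely in the interior of $\Theta_j$ for small $|\epsilon|$ and the first-order condition in $\theta$ at $\widehat\theta_{i,j}$ kills the chain-rule term coming from the implicit dependence of $\widehat\theta_{i,j}$ on $\epsilon$; only the direct $\epsilon$-dependence of $\xi_\epsilon$ survives. Together with the simplified form \eqref{Psifctsimp} of $\Psi$ this yields
$$\frac{d}{d\epsilon} T_{\mathrm{P}}(\xi_\epsilon)\Big|_{\epsilon=0} = \Psi(x_1,\xi_{s+1}) - \Psi(x_2,\xi_{s+1}).$$
Setting this to zero forces $\Psi(x_1,\xi_{s+1}) = \Psi(x_2,\xi_{s+1})$, and arbitrariness of the pair delivers constancy of $\Psi(\cdot,\xi_{s+1})$ on $\mathrm{supp}(\xi_{s+1})$.

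The main technical hurdle I anticipate is making the envelope step rigorous: one has to verify that for small $|\epsilon|$ the minimizer $\widehat\theta_{i,j}(\xi_\epsilon)$ is well-defined, stays in the interior of $\Theta_j$, and depends at least continuously on $\epsilon$, so that the implicit chain-rule contribution is genuinely zero. This follows from Assumptions~\ref{assum1}--\ref{assum2} via the implicit function theorem applied to the first-order condition at $\widehat\theta_{i,j}(\xi_{s+1})$, provided the $\theta$-Hessian there is nondegenerate, which is generic under Assumption~\ref{assum2}. Should one wish to dispense with Assumption~\ref{assum2}, the argument should instead proceed through one-sided directional derivatives of the concave functional $T_{\mathrm{P}}$ together with the measures $\hat\mu_{i,j}$ appearing in the general definition \eqref{Psifct} of $\Psi$, but the same simplex edge-perturbation principle would deliver the conclusion.
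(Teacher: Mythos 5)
Your proposal is correct and is essentially the paper's own argument: the paper likewise eliminates one weight via the simplex constraint, differentiates $g(\omega)$ with respect to the remaining weights using the envelope-type formula of Lemma \ref{lemma1}, obtains $\partial g/\partial\omega_k = \Psi(x_k,\xi_{s+1})-\Psi(x_t,\xi_{s+1})$, and sets this to zero at the interior optimum — which is exactly your edge-perturbation $\delta_{x_1}-\delta_{x_2}$ in different notation. The only cosmetic difference is that the paper justifies the envelope step by citing Pshenichny's differentiation formula rather than an implicit-function-theorem argument.
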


\subsection{Quadratic programming} \label{sec31}
Let  $\mathcal{S}_{[s+1]} = \{x_1,\ldots , x_n\}$ denote the set
obtained in the first step of Algorithm \ref{algorithm:new} and define $\xi$ as a design supported at   $\mathcal{S}_{[s+1]} $ with
corresponding weights $\omega_1,\ldots , \omega_n$ (which have to be determined in Step 2 of the algorithm by
maximizing the function
\begin{align*}
g(\omega) = \sum_{i,j = 1}^{\nu} p_{i,j}  \sum_{k = 1}^{n} {\omega}_k \bigl[ \eta_i(x_k,\overline{\theta}_{i}) - \eta_j(x_k, \widehat{\theta}_{i,j}  ) \bigr]^2,
\end{align*}
where $ \widehat{\theta}_{i,j} = \widehat{\theta}_{i,j}  (\omega) $  is defined in \eqref{thetatilde}. For this purpose we   suggest to linearize the functions  $\eta_j(x_k,\theta_{i,j})$ in the neighborhood of point $\widehat{\theta}_{i,j}$. More precisely, we consider the function
\begin{align*}
\overline g(\omega) &= \sum_{i,j = 1}^{\nu} p_{i,j} \min_{\alpha_{i,j} \in \R^{d_j} } \sum_{k = 1}^{n} {\omega}_k \Big[ \eta_i(x_k,\overline{\theta}_{i}) - \eta_j(x_k,\widehat{\theta}_{i,j})  - \alpha_{i,j} ^T
 \frac{\partial \eta_j(x_k,{\theta}_{i,j})}{\partial {\theta}_{i,j}}\Big|_{{\theta}_{i,j} = \widehat{\theta}_{i,j}} \Big]^2. \\
&= \sum_{i,j = 1}^{\nu} p_{i,j} \min_{\alpha_{i,j} \in \R^{d_j}  } \left[ \alpha_{i,j}^{\mathrm{T}} \mathbf{J}_{i,j}^{\mathrm{T}} \mathbf{\Omega} \mathbf{J}_{i,j} \alpha_{i,j}
- 2 \omega^{\mathrm{T}} \mathbf{R}_{i,j} \alpha_{i,j}  + b_{i,j}^{\mathrm{T}} \omega \right],
\end{align*}
where  $d_j$ is the dimension of the parameter space $\Theta_j$, $ \mathbf{\Omega} = \mathrm{diag}({\omega}_1,\dots,{\omega}_{n})$ and the matrices $\mathbf{J}_{i,j}  \in \R^{n \times d_j} $, $\mathbf{R}_{i,j}  \in \R^{n \times d_j} $ and the vectors $b_{i,j}
 \in \R^{n }$ are defined by
\begin{align*}
&\mathbf{J}_{i,j} = \Bigl( \frac{\partial \eta_j(x_r,\theta_{i,j})}{\partial \theta_{i,j}}\Big|_{\theta_{i,j} = \widehat{\theta}_{i,j}} \Bigl)_{ r = 1,\dots,n},    \\
&\mathbf{R}_{i,j} = \Bigl( [\eta_i(x_r,\overline{\theta}_{i}) - \eta_j(x_r,\widehat{\theta}_{i,j}) ] \frac{\partial \eta_j(x_r,\theta_{i,j})}{\partial \theta_{i,j}}\Big|_{\theta_{i,j}
= \widehat{\theta}_{i,j}} \Bigl)_{ r = 1,\dots,n },\\
&b_{i,j}  = \Bigl(  [\eta_i(x_r,\overline{\theta}_{i}) - \eta_j(x_r,\widehat{\theta}_{i,j})]^2 \Bigl)_{r = 1,\dots,n},
\end{align*}
respectively. Obviously the minimum  with respect to
$\alpha_{i,j}$ is achieved by $\alpha_{i,j} = \left(\mathbf{J}_{i,j}^{\mathrm{T}} \mathbf{\Omega} \mathbf{J}_{i,j}\right)^{-1} \mathbf{R}_{i,j}^{\mathrm{T}} \omega$
which gives
\begin{align*}
\overline g(\omega) = -\omega^{\mathrm{T}} \mathbf{Q}  (\omega )  \; \omega + b^{\mathrm{T}} \omega,
\end{align*}
where
\begin{align*}
&\mathbf{Q}  (\omega )  = \sum_{i,j = 1}^{\nu} p_{i,j} \mathbf{R}_{i,j} \left( \mathbf{J}_{i,j}^{\mathrm{T}} \mathbf{\Omega} \mathbf{J}_{i,j} \right)^{-1} \mathbf{R}_{i,j}^{\mathrm{T}}.
\end{align*}
The matrix $\mathbf{Q}  (\omega ) $ depends on $\omega$, but if we ignore this dependence and take the matrix $\mathbf{\Omega} = \mathrm{diag}(\overline{\omega}_1,\dots,\overline{\omega}_{n})$ as fixed, then we end up with a quadratic programming problem, that is
\begin{align}
&\phi(\omega,\overline{\omega}) = -\omega^{\mathrm{T}} \mathbf{Q(\overline{\omega})} \; \omega + b^{\mathrm{T}} \omega \rightarrow \max_{\omega},  \label{iteration} \\
&\sum_{k = 1}^{n} \omega_k = 1; \; \omega_k \geq 0, \; k = 1,\dots,n. \nonumber
\end{align}
This problem is   solved iteratively until convergence, substituting each time the solution obtained in the previous iteration instead of $\overline{\omega}$. We
note that a similar idea has also been  proposed by \cite{BraessDette2013}.

\begin{rem} {\rm
In the practical implementation of the procedure it is recommended to perform only a few iterations of this step such  that
an improvement in the
difference between the value of the criterion of the starting design in Step 2 and the design obtained in the iteration of \eqref{iteration} is
observed. This will speed up the convergence of the procedure substantially.
In this case  equality of  the function $\Psi$ at the  support points of the calculated design  (as stated in Lemma~\ref{lem2}) is only achieved approximately. \\
Formally, the convergence of the algorithm is
only proved if the   iteration \eqref{iteration} is performed until convergence. However, in all examples considered so far, we  observed
convergence of the procedure, even if only a few iterations of  \eqref{iteration} are used.
 In our $R$ program the user can specify the number of iterations used in this part of the algorithm. Thus,
if any problem regarding convergence is observed, the number of iterations should be increased (of course at a cost speed
of the algorithm).
}
\end{rem}

\subsection{A gradient method} \label{sec32}

A further  option for the second step in  Algorithm~\ref{algorithm:new} is a specialized gradient method, which is used for the
function
\begin{align}\label{gfct}
g(\omega) =  \sum_{i,j = 1}^{\nu} p_{i,j}
\sum_{k = 1}^{n} {\omega}_k \big[ \eta_i(x_k,\overline{\theta}_{i}) - \eta_j(x_k,\widehat{\theta}_{i,j}  ) \big]^2
\end{align}
where $\widehat{\theta}_{i,j}  = \widehat{\theta}_{i,j} (\omega) $ is defined in \eqref{thetatilde}.
For it s description we define  the functions
\begin{align*}
v_k(\omega) = \sum_{i,j = 1}^{\nu} p_{i,j} \bigl[  \eta_i(x_k,\theta_{i})  - \eta_j(x_k,\widehat{\theta}_{i,j}(\omega))  \bigr]^2, \;  k = 1,\dots,n,
\end{align*}
and iteratively calculate a sequence of vectors  $(\omega_{(\gamma)})_{\gamma =0,1,\ldots}$.  At the beginning we choose
$\omega_{(0)} = \overline{\omega}$ (for example equal weights). If
 ${\omega}_{(\gamma )} =({\omega}_{(\gamma ),1}, \ldots,{\omega}_{(\gamma ),n})$
 is given, we proceed for $\gamma=0,1,\ldots $ as follows. We determine indices
 $\overline{k} $ and  $ \underline{k}$ corresponding to
$ \max_{1 \leq k \leq n} v_k(\omega_{(\gamma)})$ and
$\min_{1 \leq k \leq n} v_k(\omega_{(\gamma)})$, respectively, and define
\begin{align} \label{argmax}
\alpha^* = \arg \max_{0 \leq \alpha \leq \omega_{(\gamma),\underline{k}}} g(\overline{\omega}_{(\gamma)}(\alpha)),
\end{align}
where the vector  $\overline{\omega}_{(\gamma )} (\alpha)
=(\overline{\omega}_{(\gamma ),1}(\alpha), \ldots,\overline{\omega}_{(\gamma ),n}(\alpha))$
is  given by
$$
\overline{\omega}_{(\gamma),i} (\alpha)= \left\{ \begin{array}{ll}
\omega_{(\gamma),i}  + \alpha & \mbox{ if } i=\overline{k}\\
\omega_{(\gamma),i}  -  \alpha & \mbox{ if }  i= \underline{k}\\
\omega_{(\gamma),i}   & \mbox{ else } \\
\end{array}
\right.
$$
The vector ${\omega}_{(\gamma+1)} $ of the next iteration is then defined by  $
{\omega}_{(\gamma+1)}  = \overline{\omega}_{(\gamma)} (\alpha^*).$
The following theorem shows that the generated sequence of vectors converges to a maximizer of the function $g$ in \eqref{gfct} and is proved in   the Appendix.
\begin{thm}
\label{conv_theorem}
The sequence $(\omega_{(\gamma)})_{\gamma \in \mathbb{N}}$ converges to a vector
$\omega^* \in \argmax g(\omega)$.
\end{thm}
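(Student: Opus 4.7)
The function $g$ is concave on the probability simplex $\Delta=\{\omega\in\R^n_+:\sum_k\omega_k=1\}$, since it is an infimum of functions linear in $\omega$. Under Assumption~\ref{assum2} the minimizers $\widehat{\theta}_{i,j}(\omega)$ are unique and depend continuously on $\omega$, so the envelope theorem gives that $g$ is differentiable with $\partial g/\partial\omega_k=v_k(\omega)$, and each $v_k$ is continuous on $\Delta$. The algorithm is therefore a greedy vertex-exchange between the coordinates with largest and smallest gradient components, equipped with an exact line search \eqref{argmax}. Because $\alpha^{*}$ is chosen optimally, $g(\omega_{(\gamma+1)})\geq g(\omega_{(\gamma)})$, so the sequence $(g(\omega_{(\gamma)}))_\gamma$ is non-decreasing; continuity of $g$ on the compact set $\Delta$ makes it bounded from above, hence convergent to some value $g^{\infty}$.

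By compactness every subsequence of $(\omega_{(\gamma)})$ has a further subsequence $\omega_{(\gamma_l)}\to\omega^{*}\in\Delta$. Since $g$ is concave, $\omega^{*}$ is a global maximizer if and only if the KKT conditions hold: there exists $\lambda$ with $v_k(\omega^{*})=\lambda$ whenever $\omega^{*}_k>0$ and $v_k(\omega^{*})\leq\lambda$ whenever $\omega^{*}_k=0$. The proof therefore reduces to verifying these conditions at every cluster point.

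The crux is this KKT verification, which I would carry out by contradiction. Assume there are indices $a,b$ with $\omega^{*}_b>0$ and $v_a(\omega^{*})-v_b(\omega^{*})=2\delta>0$. By continuity of the $v_k$, for sufficiently large $l$ the pair $(\overline{k},\underline{k})$ selected by the algorithm satisfies $v_{\overline{k}}(\omega_{(\gamma_l)})-v_{\underline{k}}(\omega_{(\gamma_l)})\geq\delta$ and $\omega_{(\gamma_l),\underline{k}}\geq\mu$ for some $\mu>0$. Along the search ray, the concave function $h(\alpha)=g(\omega_{(\gamma_l)}+\alpha(e_{\overline{k}}-e_{\underline{k}}))$ satisfies $h'(0)\geq\delta$; smoothness of the $\eta_i,\eta_j$ together with continuous dependence of $\widehat{\theta}_{i,j}(\cdot)$ on $\omega$ (via the implicit function theorem applied to the first-order condition defining $\widehat{\theta}_{i,j}$) yield a global Lipschitz constant $L$ for $\nabla g$ on $\Delta$. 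A standard one-dimensional estimate then gives $\max_{0\leq\alpha\leq\mu}h(\alpha)-h(0)\geq c\delta^{2}$ with $c>0$ independent of $l$, contradicting $g(\omega_{(\gamma_l+1)})-g(\omega_{(\gamma_l)})\to 0$. Hence the KKT conditions hold at $\omega^{*}$ and, by concavity, $\omega^{*}\in\argmax g$.

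Because every cluster point lies in $\argmax g$ and $g(\omega_{(\gamma)})\to\max g$, the iterates accumulate only on the optimal set, yielding the stated convergence (which is honest sequential convergence whenever $\argmax g$ is a singleton, and convergence to the optimal set otherwise). The main obstacle is the per-step improvement estimate in the contradiction argument: the uniform Lipschitz bound on the $v_k$, and the uniform lower bound $\mu$ on $\omega_{(\gamma_l),\underline{k}}$, both have to be available simultaneously along the relevant subsequence, and justifying them requires exploiting the continuous (in fact differentiable) dependence of $\widehat{\theta}_{i,j}$ on $\omega$ guaranteed by Assumptions~\ref{assum1} and~\ref{assum2}.
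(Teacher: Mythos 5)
Your proof follows essentially the same route as the paper's: monotonicity of $g(\omega_{(\gamma)})$ from the exact line search, boundedness, extraction of a convergent subsequence to a cluster point $\omega^{*}$, and a contradiction obtained by showing that non-optimality of $\omega^{*}$ forces a per-step improvement of order $\delta^{2}/(2K)$ along the subsequence, which is incompatible with convergence of the values. Your KKT formulation for the concave program on the simplex is just a repackaging of what the paper gets by applying its equivalence theorem (Theorem \ref{thm1}) with $\mathcal{X}=\{x_1,\dots,x_n\}$, and your use of the envelope theorem to identify $\partial g/\partial\omega_k$ with $v_k(\omega)$ is the paper's Lemma \ref{lemma1}. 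So in structure the two arguments coincide.

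The one substantive point is the uniform lower bound $\omega_{(\gamma_l),\underline{k}}\geq\mu>0$ that you correctly single out as the crux and leave unjustified. It does not follow from continuity: $\underline{k}$ is chosen as the global minimizer of $v_k(\omega_{(\gamma_l)})$, which need not be the index $b$ witnessing the KKT violation; it can be an index whose weight is zero or tends to zero along the subsequence, in which case the feasible line-search interval $[0,\omega_{(\gamma_l),\underline{k}}]$ collapses and the quadratic improvement estimate $\max_{0\le\alpha\le\mu}h(\alpha)-h(0)\ge c\delta^{2}$ is no longer available. You should be aware that the paper's own proof has exactly the same unaddressed issue: its Taylor step maximizes $\alpha\delta-\tfrac{1}{2}\alpha^{2}K$ over all $\alpha$ to obtain $\delta^{2}/(2K)$, silently ignoring the constraint $\alpha\le\omega_{(\gamma),\underline{k}}$ imposed by \eqref{argmax}. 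A clean fix would either modify the selection rule (e.g.\ take $\underline{k}$ as the minimizer of $v_k$ among indices with $\omega_k$ above a threshold, or show that an index with small weight and small $v_k$ cannot be the binding violation), or argue that when $\omega_{(\gamma_l),\underline{k}}\to 0$ the violated KKT pair $(a,b)$ itself eventually becomes the selected pair. As written, your argument is no less rigorous than the paper's, but neither closes this gap.
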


\begin{rem}{\rm
It is worthwhile to  mention  that the one dimensional optimization problem \eqref{argmax} is  computationally  rather expensive.
In the   implementation we use a  linearization of the optimization problem, which is obtained  in a
similar way  a described in Section \ref{sec31}.
}
\end{rem}

%---------------------------------------------
\section{Implementation and numerical examples }\label{sec5}
%---------------------------------------------

We have implemented the procedure for the calculation of the locally $T$-optimal discriminating design
 in $R$, where the user has to specify the weights $p_{i,j}$
 and the  corresponding  preliminary information regarding the parameters $\overline{\theta}_i$.
 % [see {\color{red} \bf Roman: it would be nice if we could give a reference to the CRAN library here}]
 To be precise, we  call
\begin{align*}
\mathrm{P} = \left[
\begin{array}{ccccc}
p_{1,1} & p_{1,2} & \dots & p_{1,\nu-1} & p_{1,\nu} \\
\vdots & \vdots & \vdots & \vdots & \vdots\\
p_{\nu,1} & p_{\nu,2} & \dots & p_{\nu,\nu-1} & p_{\nu,\nu} \\
\end{array}
\right]
\end{align*}
the comparison table for the  locally $T$-optimal discriminating design problem under consideration.
This table has to be specified by the experimenter. Because the Bayesian $T$-optimal design problem
with a discrete prior  can be reduced to a locally  $T$-optimal   one with a
large number of model comparisons,
we now describe the corresponding table for the Bayesian $T$-optimality criterion.
For illustration purposes we consider
 the case $\nu=2$. The Bayesian $T$-optimality  criterion is given in \eqref{2.7}, where the prior for
 the parameter $\theta_1$ puts masses
 $\tau_{1}, \ldots \tau_{\ell}$ at the points $\lambda_{1},\ldots ,\lambda_{\ell}$. This criterion can  be rewritten as
 a local $T$-optimality criterion of the form \eqref{2.4}, i.e.
\begin{align}
T_{\mathrm{P}}(\xi) = \sum_{i,j=1}^{\ell+1}  {p}_{i,j} \inf_{\theta_{i,j} \in \Theta_j} \int_{\mathcal{X}} \Big[ \eta_i(x, {\theta}_{i}) - \eta_j(x,\theta_{i,j}) \Big]^2 \xi(dx),
\end{align}
where comparison table is given by
\begin{align}
{\mathrm{P}} = ({p}_{i,j})_{i,j=1,\ldots ,\ell+1}~=~  \left[ \begin{array}{ccccc}
 0 &  0 & \dots &  0 &  \tau_1 \\
0 &  0 & \dots & 0 &  \tau_2 \\
\vdots & \vdots & \vdots & \vdots & \vdots\\
0 & 0 & \dots &  0 &  \tau_\ell \\
0 & 0 & \dots & 0 & 0 \\
\end{array} \right] \in \R^{\ell +1 \times \ell +1 } ,
\end{align}
$\eta_i(x,\overline{\theta}_{i}) = \eta_1(x,\lambda_i ), \; i = 1,\dots, \ell$ and $\eta_{\ell + 1}(x,\theta_{i,j}) = \eta_2(x,\theta_{i,\ell + 1})$. The extension of this approach to
more than two models is easy and left to the reader. We now illustrate the new method in two examples  calculating Bayesian $T$-optimal discriminating designs.
We have implemented both procedures described in Section \ref{sec31} and \ref{sec32} and the results were similar. For this reason we only represent
 the Bayesian $T$-optimal discriminating designs  calculated by Algorithm \ref{algorithm:new}, where  the quadratic programming method was used in
  Step 2  [see Section \ref{sec31}
 for details].

\subsection{Bayesian $T$-optimal discriminating designs for  exponential models} \label{sec51}

Consider the problem of discriminating between the  two regression models
\begin{align}
&\eta_1(x,\theta_1) = \theta_{1,1} - \theta_{1,2} \exp(-\theta_{1,3} x^{\theta_{1,4}}),  \label{example1}\\
&\eta_2(x,\theta_2) = \theta_{2,1} - \theta_{2,2} \exp(-\theta_{2,3} x),\nonumber
\end{align}
where the design  space is given by the interval $[0,10]$.
Exponential models of the form \eqref{example1} are widely used in  applications. For
example, the    model $\eta_2$  is frequently fitted  in agricultural sciences, where it is called
 Mitscherlichs growth law and used for describing the relation between
the yield of a crop and the amount of fertilizer.
In fisheries research this model is  called  Bertalanffy growth curve  and used
for the description of the length of a fish in dependence of its age
[see \cite{ratkowsky1990}].
Optimal designs for exponential regression models  have been
determined  by  \cite{hancha2003}  among others. In the following we will demonstrate the performance of the new
algorithm in calculating Bayesian $T$-optimal discriminating designs for the two exponential models. Note that it make only sense
to consider the Bayesian version of $T_{12}$, because the model $\eta_2$ is obtained as a special case of $\eta_1$ for $\theta_{1,4}=1$.
  It is easy to see that the locally $T$-optimal discriminating designs do not depend on the linear parameters of  $\eta_1$ and
  we have chosen $\bar \theta_{1,1}=2$ and $\bar \theta_{2,2}=1 $ for these parameters. For the parameters $\bar \theta_{1,3}$ and $\bar \theta_{1,4}$ we considered independent
  prior distributions supported at the points
  \begin{equation} \label{sup1}
  \mu_j + \frac {\sigma(i-3)}{2} \qquad \qquad i=1,\dots,5 \, ; \quad j=3,4 \, ,
  \end{equation}
  where $\mu_3 = 0.8, \ \mu_4 = 1.5$ and different values of the variance $\sigma^2$ are investigated. The corresponding weights at these points are
  proportional  (in both cases) to
  \begin{equation} \label{wei1}
  \frac {1}{\sqrt{2 \pi \sigma^2}} \exp \Bigl( - \frac {(i-3)^2}{8} \Bigr); \qquad \qquad i=1,\dots, 5\, .
  \end{equation}
 We note that this yields $25$ terms in the Bayesian optimality criterion \eqref{2.7}.
   Bayesian $T$-optimal discriminating designs are depicted in Table \ref{tab1}  for various values of $\sigma^2$, where an equidistant design at $11$ points $0,1,\dots,10$ was used as starting design.
   \begin{table}[h]
\begin{center}
\begin{tabular}{||c||c||c||c||}
\hline
\hline
$\sigma^2$ & Optimal design & $\sigma^2$ & Optimal design\\
\hline
\hline
0.0 & $\begin{matrix}
0.000 & 0.441 & 1.952 & 10.000 \\
0.209 & 0.385 & 0.291 & 0.115
\end{matrix}$ & 0.285 & $\begin{matrix}
0.000 & 0.453 & 1.758 & 10.000 \\
0.207 & 0.396 & 0.292 & 0.105
\end{matrix}$\\
\hline
0.1 & $\begin{matrix}
0.000 & 0.452 & 1.877 & 10.000 \\
0.209 & 0.391 & 0.290 & 0.110
\end{matrix}$ & 0.3 & $\begin{matrix}
0.000 & 0.452 & 1.747 & 4.951 & 10.000 \\
0.207 & 0.396 & 0.292 & 0.003 & 0.102
\end{matrix}$\\
\hline
0.2 & $\begin{matrix}
0.000 & 0.455 & 1.811 & 10.000 \\
0.208 & 0.394 & 0.291 & 0.107
\end{matrix}$ & 0.4 & $\begin{matrix}
0.000 & 0.446 & 1.651 &  4.699 & 10.000 \\
0.200 & 0.384 & 0.290 & 0.060 & 0.066
\end{matrix}$\\
\hline
\hline
\end{tabular}
\end{center}
 \caption{\it \label{tab1} Bayesian $T$-optimal discriminating designs for the two exponential models in \eqref{example1}.  The support points and weights of
 the  independent prior distributions for the parameters $\overline{\theta}_{1,3}$  and $\overline{\theta}_{1,4}$   are given by \eqref{sup1} and \eqref{wei1}, respectively.}
\end{table}
   
  A typical determination of the
  optimal design takes  between
$0.03$  seconds  (in the case  $\sigma^2=0$)   and  $1.4$  seconds  (in the case  $\sigma^2=0.4$)    CPU  time on a standard PC (with an intel core i7-4790K processor).
The algorithm using the procedure described in Section \ref{sec32}
in  step 2 requires  between
$0.11$  seconds  (in the case  $\sigma^2=0$)   and  $11.6$  seconds  (in the case  $\sigma^2=0.4$)    CPU  time. We observe that for small values of $\sigma^2$ the optimal designs are supported at $4$ points, while for $\sigma^2 \geq 0.285$ the Bayesian $T$-optimal discriminating design is supported at $5$ points.
 The corresponding function $\Psi$ from the equivalence Theorem \ref{thm1}.
  is shown in Figure \ref{fig1}.
\begin{figure}[h]
\centering
%\begin{tabular}{ccc}
%\subcaptionbox{$\sigma^2 = 0$.}
  {\includegraphics[width=45mm]{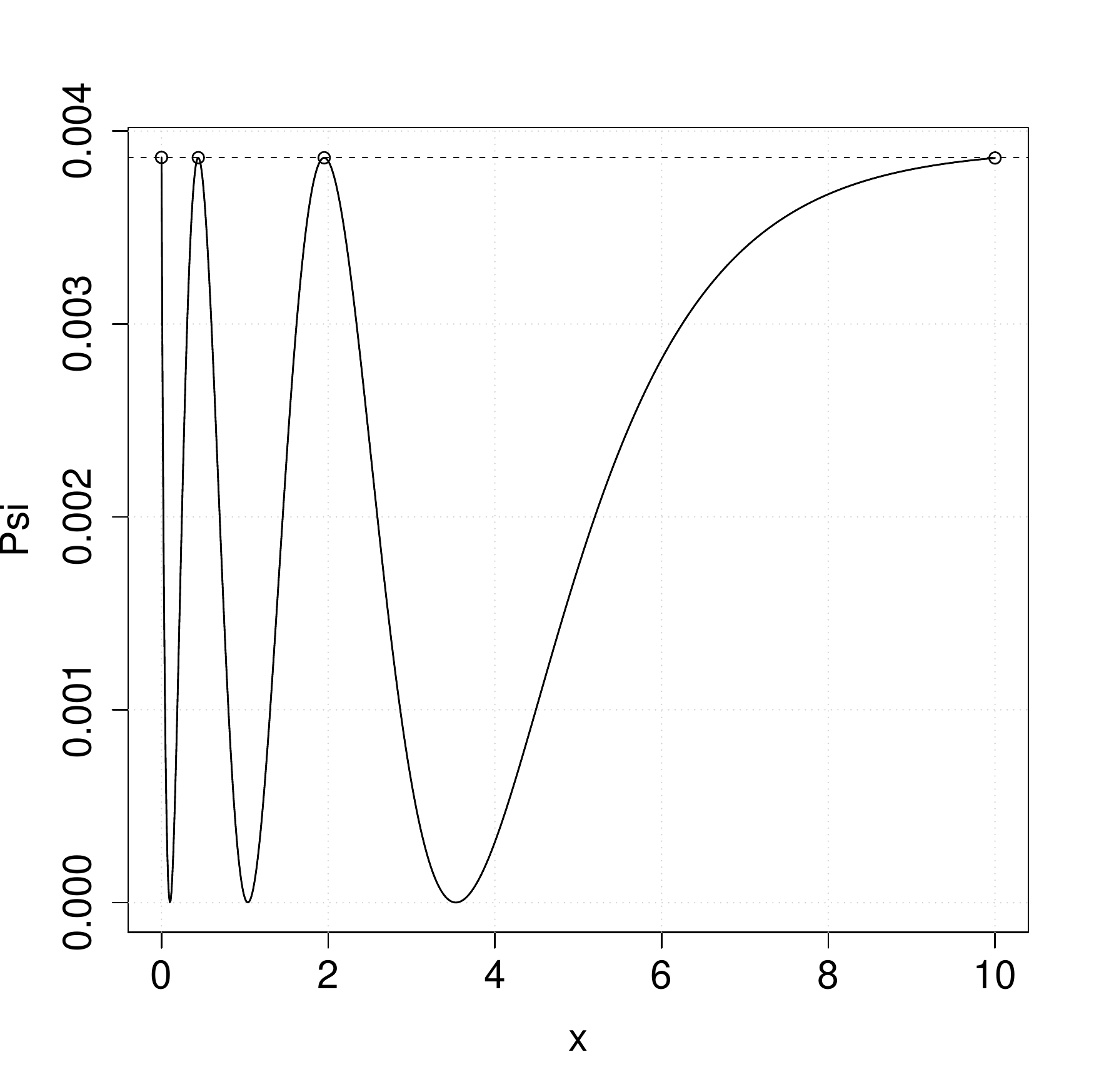}}
%\subcaptionbox{$\sigma^2 = 0.1$.}
  {\includegraphics[width=45mm]{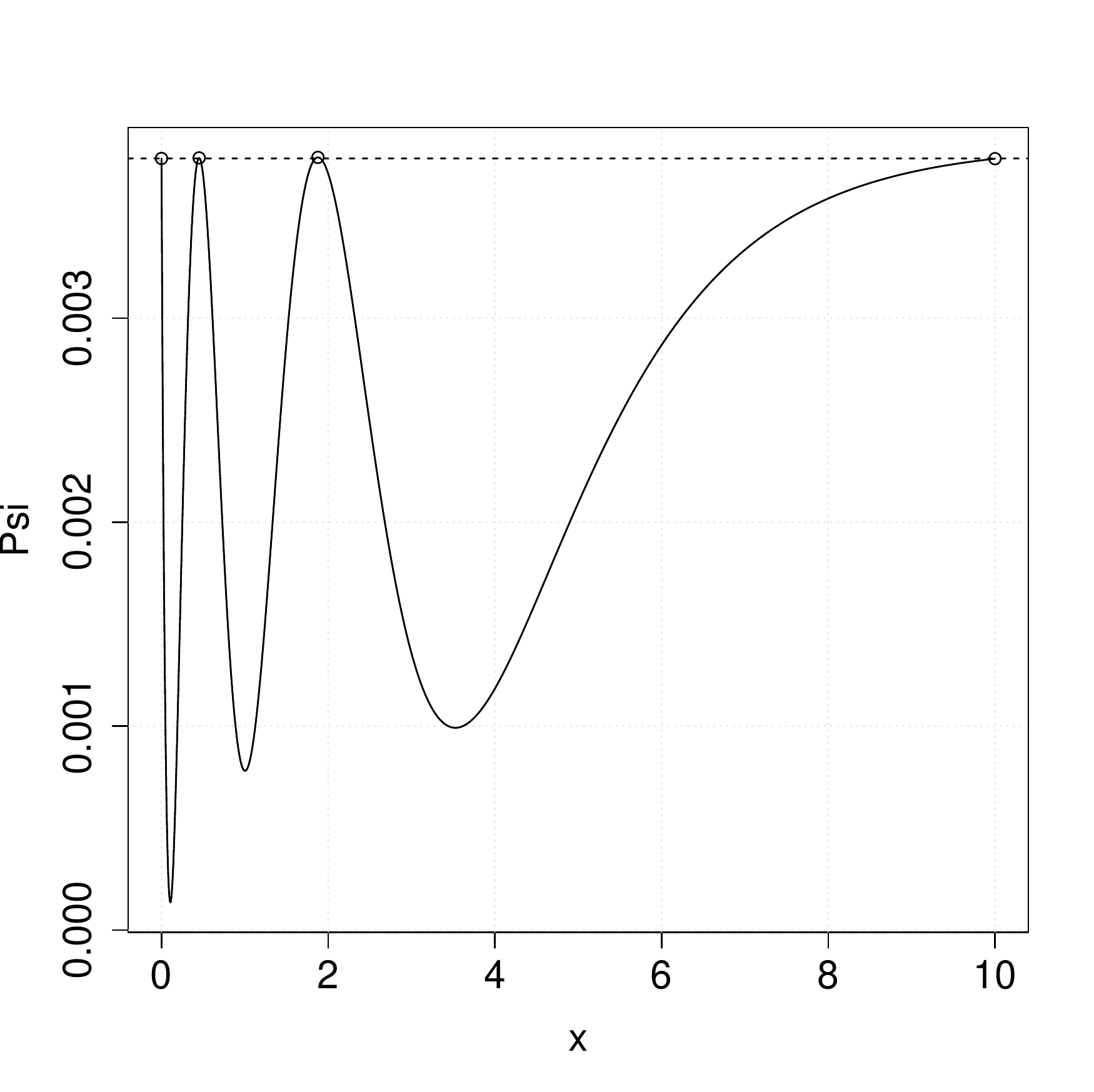}}
%\subcaptionbox{$\sigma^2 = 0.2$.}
  {\includegraphics[width=45mm]{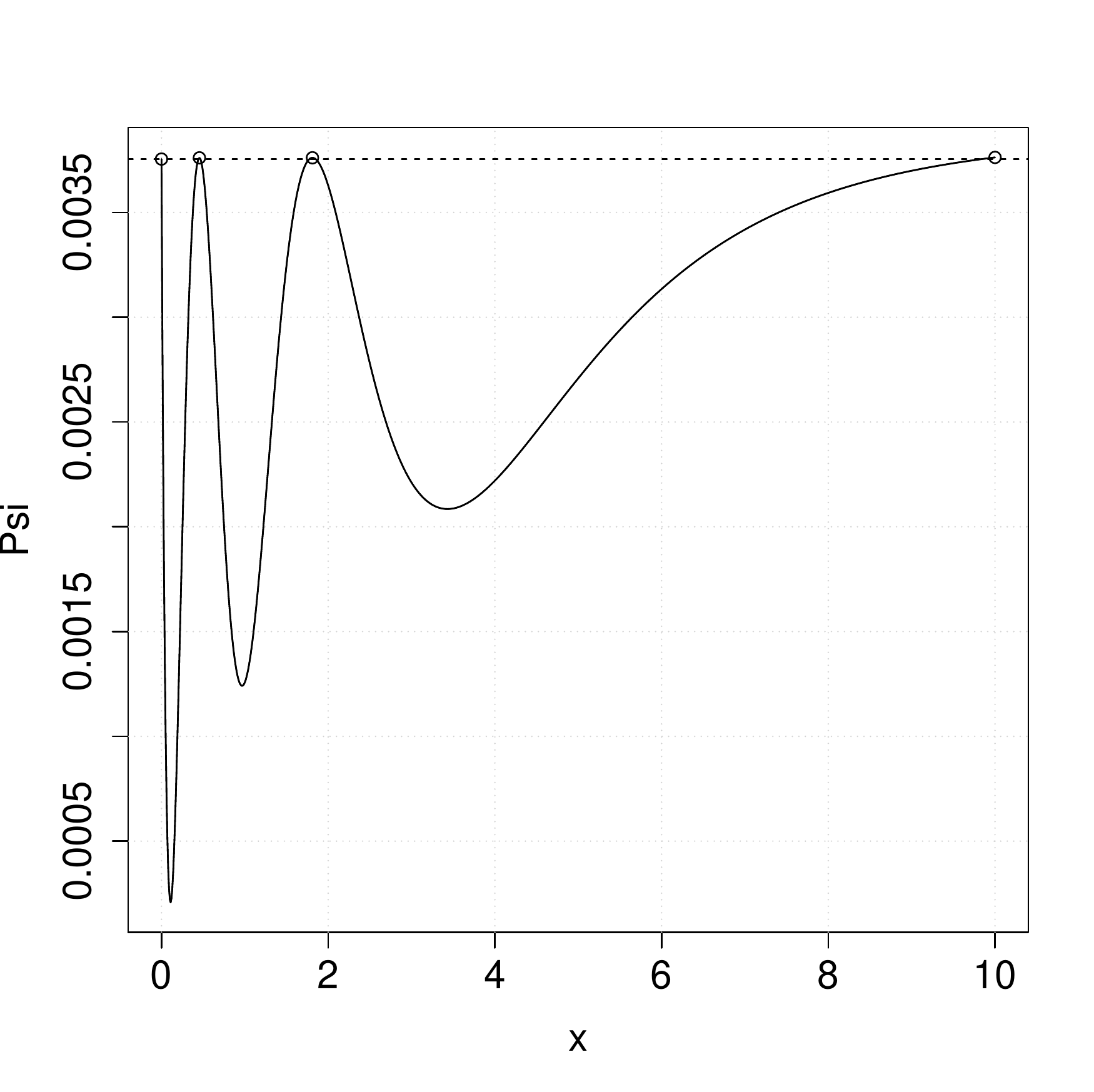}}   \\
$\sigma^2 = 0$~~~~~~~~~~~~~~~~~~~~~~~~~$\sigma^2 = 0.1$ ~~~~~~~~~~~~~~~~~~~~ $\sigma^2 = 0.2$ \\
%\subcaptionbox{$\sigma^2 = 0.245$.}
  {\includegraphics[width=45mm]{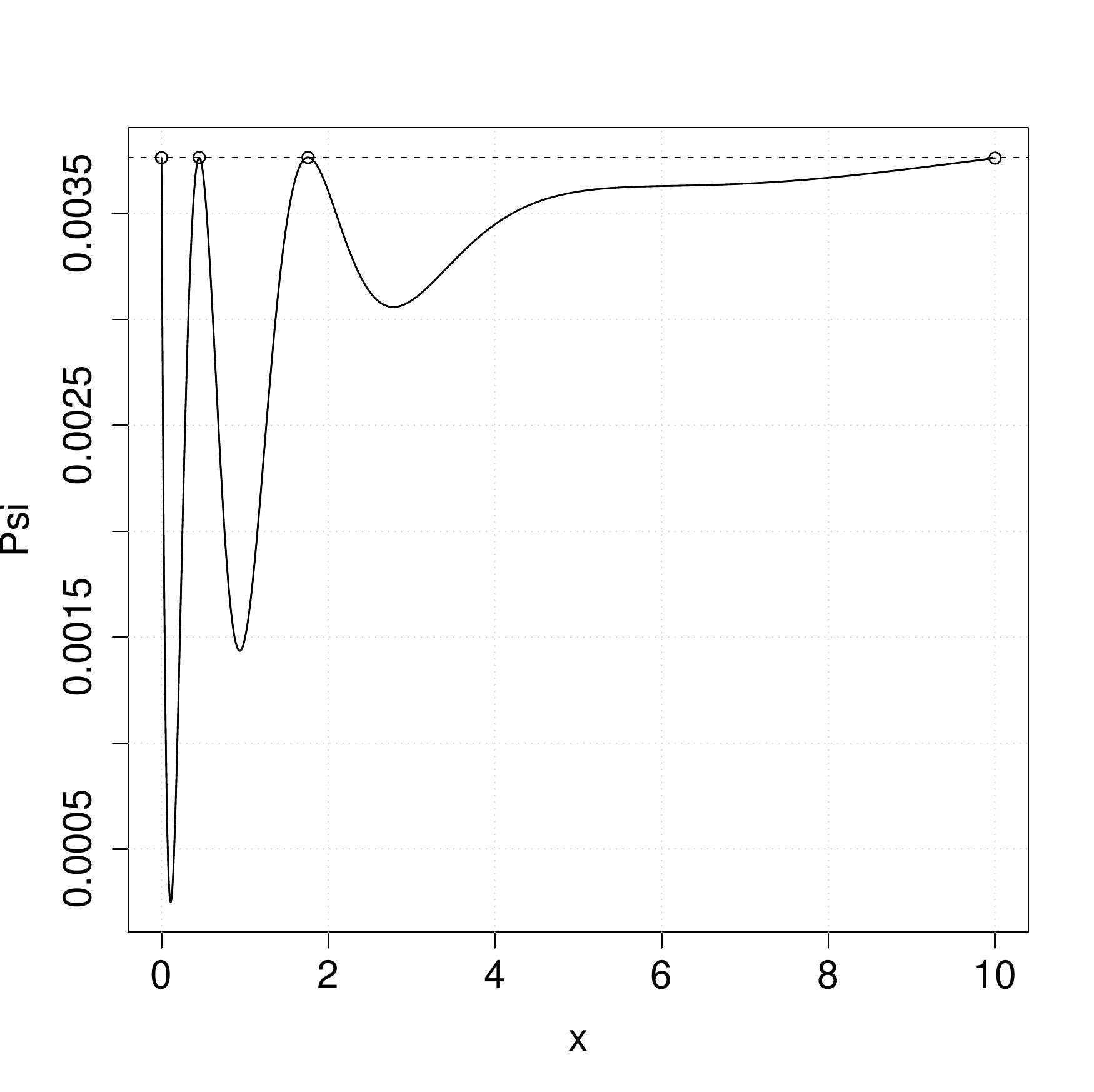}}
%\subcaptionbox{$\sigma^2 = 0.3$.}
  {\includegraphics[width=45mm]{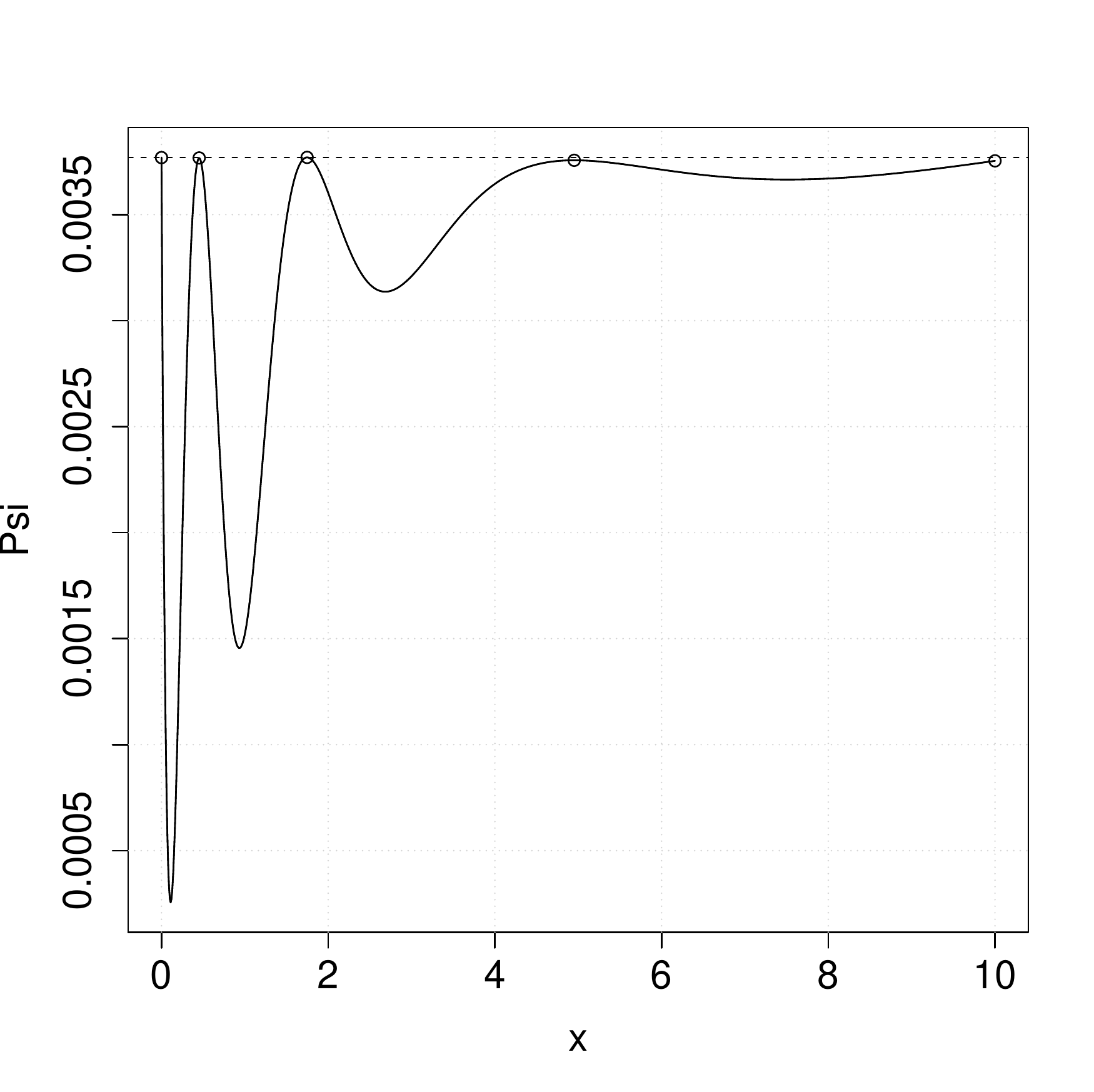}}
%\subcaptionbox{$\sigma^2 = 0.4$.}
  {\includegraphics[width=45mm]{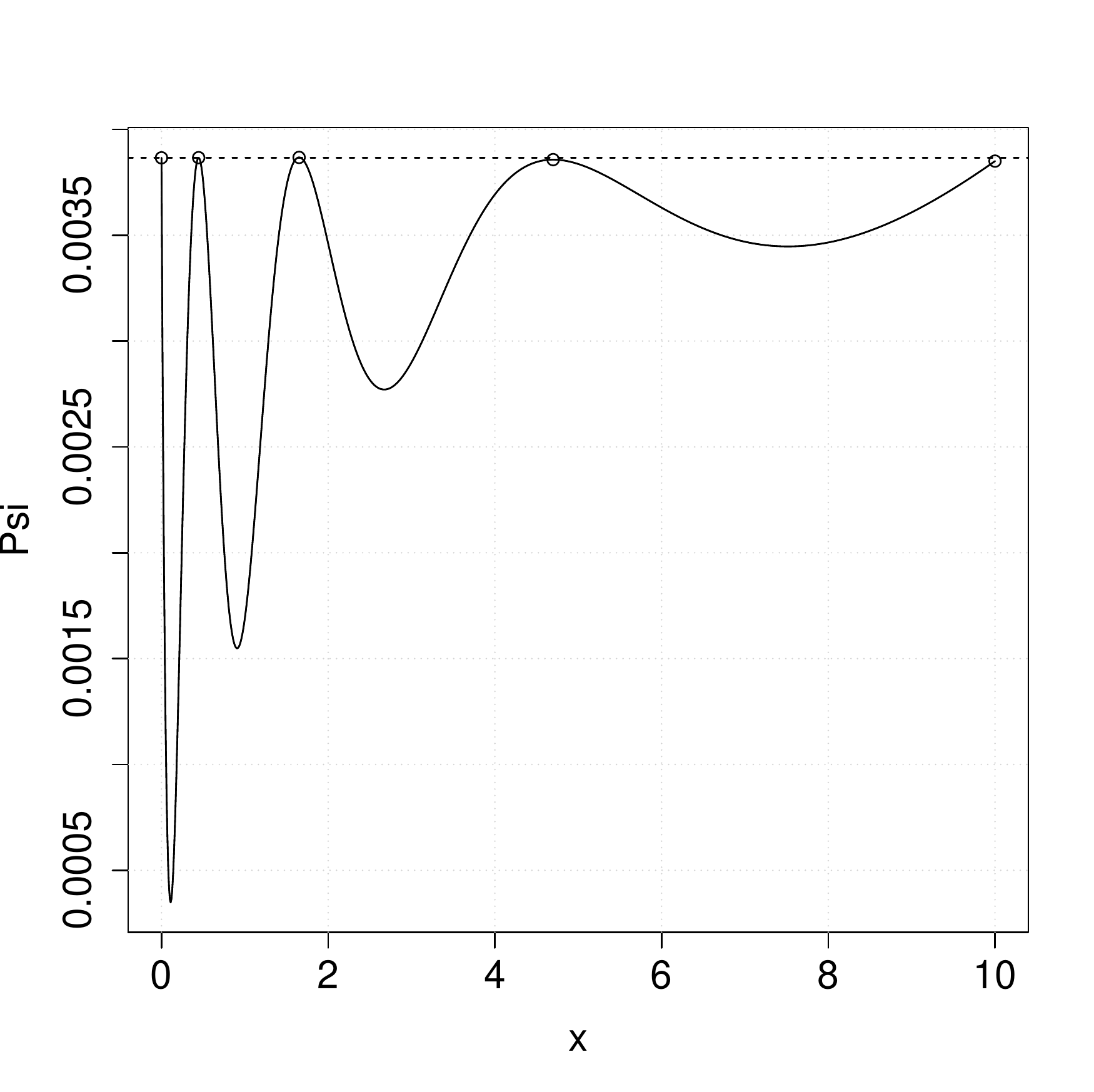}}
$\sigma^2 = 0.285$~~~~~~~~~~~~~~~~~~~~~~~~$\sigma^2 = 0.3$ ~~~~~~~~~~~~~~~~~~~ $\sigma^2 = 0.4$
%\end{tabular}
 \caption{\it \label{fig1} The function on the left hand side of inequality
\eqref{equiv} in the equivalence Theorem \ref{thm1} for the numerically calculated Bayesian $T$-optimal discriminating designs.
The  competing regression models are given in \eqref{example1}.}
\end{figure}

\subsection{Bayesian $T$-optimal discrimination designs for dose finding studies} \label{sec52}
Non-linear regression models have  also numerous applications in dose response studies, where they
are used to describe the dose response relationship. In these and similar situations  the first step of the data
analysis consists in the identification of an appropriate model, and the design of experiment should take this
task into account.  For example, for modeling  the dose response relationship of a Phase II clinical trial
\cite{pinbrebra2006} proposed  the following plausible models
\begin{align}
&\eta_1(x, \theta_1) = \theta_{1,1} + \theta_{1,2} x ;  \nonumber \\
&\eta_2(x, \theta_2) = \theta_{2,1} + \theta_{2,2} x (\theta_{2,3} - x) ; \label{example2}  \\
&\eta_3(x, \theta_3) = \theta_{3,1} + \theta_{3,2} x / (\theta_{3,3} + x) ; \nonumber\\
&\eta_4(x, \theta_4) = \theta_{4,1} + \theta_{4,2} / (1 + \exp(\theta_{4,3} - x) / \theta_{4,4}) ; \nonumber
\end{align}
where the designs space (dose range) is given by the interval
$\mathcal{X}=[0,500]$. In this reference  some prior information  regarding the parameters for the  models is also provided, that is
\begin{align*}
\overline{\theta}_1= (60, 0.56), \; \overline{\theta}_2= (60, 7/2250, 600), \; \overline{\theta}_3 = (60, 294, 25), \; \ \overline{\theta}_4=
(49.62, 290.51, 150, 45.51).
\end{align*}
Locally optimal discrimination designs for the models in \eqref{example2} have been determined by  \cite{BraessDette2013}  in the case
$p_{i,j}=1/6$, $(1\leq j<i \leq 4 )$, which means that the resulting local $T$-optimality criterion \eqref{2.4} consists of $6$  model comparisons.
 \\
We begin with an illustration of the new  methodology developed in Section \ref{sec3} calculating again the locally $T$-optimal discriminating design
for this scenario.
The proposed algorithm  needs  only  four iterations for the calculation of a design, say $\xi_{4}$, which has at least
efficiency
\begin{align*}
\mathrm{Eff}_{T_{\mathrm{P}}}( \xi_{ 4 } ) = \frac{T_{\mathrm{P}}(\tilde \xi_{4} )}{\sup_{\zeta} T_{\mathrm{P}}(\zeta)} ~ \geq ~0.999.
\end{align*}
The function $\Psi (\cdot , \xi_{ 1 } )$ after the first iteration is displayed in Figure \ref{figrefine}, where we used the same starting design as in
 \cite{BraessDette2013}.
 The support  points  of  $ \ \xi_{ 1 }$ are shown as circles and we can see that function $\Psi(x, \xi_{1})$ attains one and the same value, which is represented with dotted line, for all support points.
We finally note that the algorithm proposed in  \cite{BraessDette2013} needs $9$ iterations to find a design with the same efficiency.

\begin{figure}[h]
\centering
  \includegraphics[width=45mm]{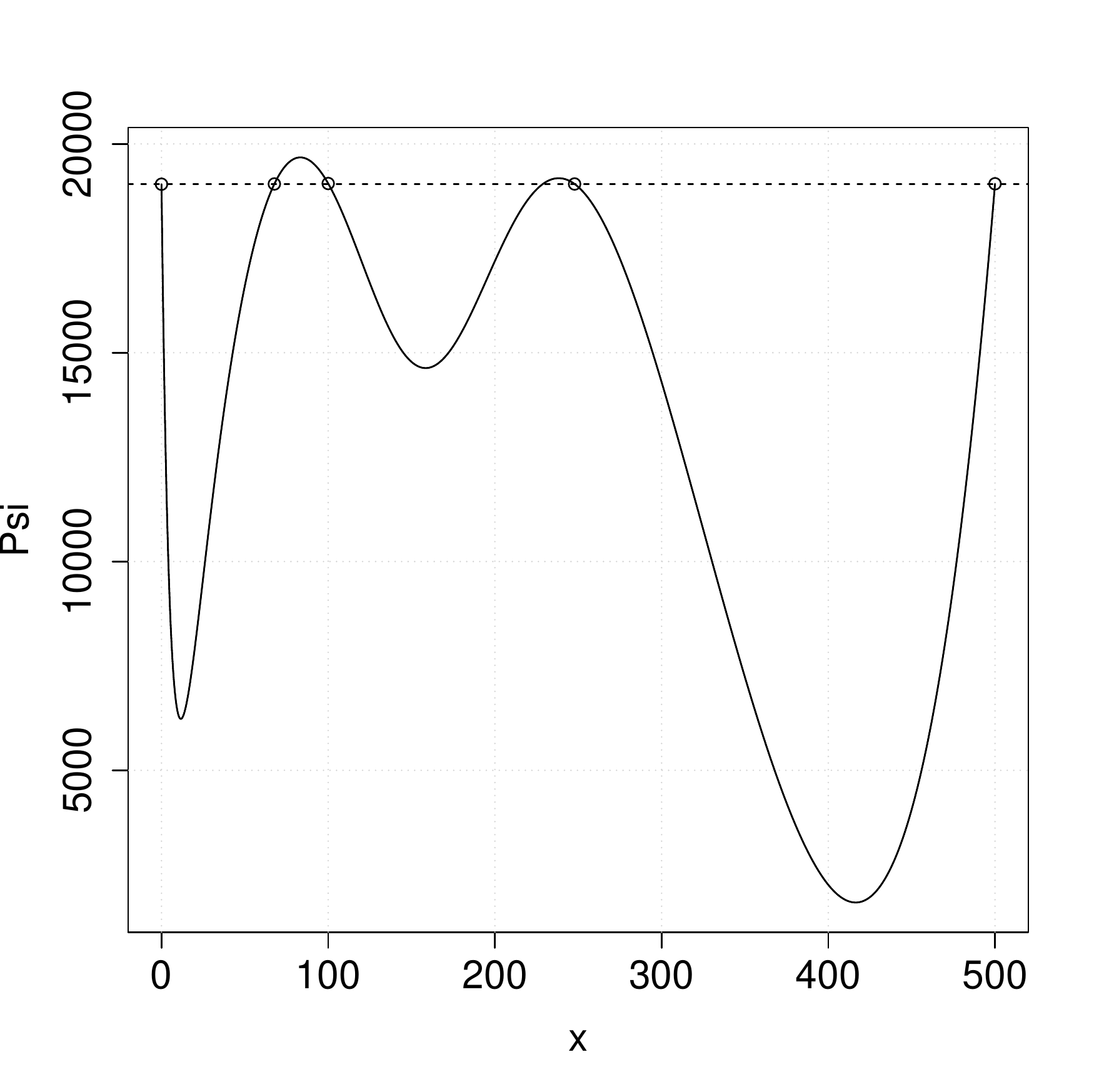}
\caption{\it \label{figrefine}  The function  $\Psi (\cdot ,  \xi_{1} )$ after the first iteration of  Algorithm   \ref{algorithm:new} }
\end{figure}

We now investigate Bayesian $T$-optimal discriminating designs for a similar situation.  For the sake of a transparent
representation we  only specify a prior distribution of the four-dimensional parameter $\overline{\theta}_4$
for the calculation of the discriminating design, while  $\overline{\theta}_2$ and $\overline{\theta}_3$  are considered as fixed.
In order to obtain a design which is robust with respect to model misspecification
we chose a prior discrete prior with $81$ points in $\R^4$. More precisely, the support points  of the prior distribution are given
by the points
\begin{align} \label {prior1}
\big\{\mu_{e_1,e_2,e_3,e_4}~|~e_1,e_2,e_3,e_4 \in \{-1,0,1\}\big\},
\end{align}
 where
 \begin{align*}
  \mu_{e_1,e_2,e_3,e_4} & = (\mu_1 + e_{1} \sigma, \mu_2 + e_{2} \sigma,
 \mu_3 + e_{3} \sigma, \mu_4 + e_{4} \sigma),\\
 \mu &=(\mu_1,\mu_2,\mu_3,\mu_4)=  (49.62, 290.51, 150, 45.51),
 \end{align*}
 and different values for $\sigma^2$ are considered.
  The weights at the corresponding points are proportional (normalized such that their sum is $1$) to
\begin{align} \label {prior2}
 \frac{1}{(2\pi\sigma ^2)^2} \exp \Big(\frac{ || \mu_{e_1,e_2,e_3,e_4} - \mu ||_2^2}{2\sigma^2} \Big)~,~~e_1,e_2,e_3,e_4 \in \{-1,0,1\},
\end{align}
  where $||\cdot ||_2$ denotes the Euclidean norm.   The resulting Bayesian optimality criterion \eqref{2.6}
  consist of $246$ model comparisons.  In this case the method of \cite{BraessDette2013} fails to find the Bayesian $T$-optimal discriminating design. Bayesian $T$-optimal discriminating designs have been calculated by the new
Algorithm   \ref{algorithm:new} for various values of $\sigma^2$ and the results are shown in Table \ref{tab2}. A typical determination of the
  optimal design takes between $0.09$ seconds  (in the case  $\sigma^2=0$) and  $7.8$ seconds  (in the case  $\sigma^2=37^2$)  CPU time on a standard PC.
  The algorithm using the procedure described in Section \ref{sec32}
in Step 2 requires  between
$0.75$  seconds  (in the case  $\sigma^2=0$)   and  $37.1$  seconds  (in the case  $\sigma^2=37^2$)    CPU  time.
  For small values the Bayesian $T$-optimal discriminating designs are  supported at $4$ points including the boundary
  of the design space. The smaller (larger) interior support point is increasing (decreasing)
  if $\sigma^2$ is increasing. For larger values of $\sigma^2$ even the number of support points of the optimal
  design increases. For example, if $\sigma^2=35^2$ or $37^2$ the Bayesian $T$-optimal discriminating design
  has $5$ or $6$ points (including the boundary points of the design space). These observations are in line with the
  theoretical finding of  \cite{detbra2007}  who showed that the number of support points of Bayesian $D$-optimal designs can become
  arbitrarily large with an increasing variability in the prior distribution. The corresponding functions from the equivalence Theorem \ref{thm1}
  are shown in Figure \ref{fig2}.

\begin{table}[h]
\begin{center}
\begin{tabular}{||c||c||c||c||}
\hline
\hline
$\sigma^2$ & optimal design & $\sigma^2$ & optimal design\\
\hline
\hline
0 & $\begin{matrix}
0.000 & 78.783 & 241.036 & 500.0 \\
0.255 & 0.213 & 0.357 & 0.175
\end{matrix}$ & $33^2$ & $\begin{matrix}
0.000 & 92.692 & 222.735 & 500.0 \\
0.260 & 0.240 & 0.344 & 0.156
\end{matrix}$\\
\hline
$20^2$ & $\begin{matrix}
0.000 & 84.467 & 234.134 & 500.0 \\
0.257 &  0.225 &  0.351 & 0.167
\end{matrix}$ & $35^2$ & $\begin{matrix}
0.000 & 91.743 & 129.322 & 221.118 & 500.0 \\
0.260 & 0.214 & 0.036 & 0.336 & 0.154
\end{matrix}$\\
\hline
$30^2$ & $\begin{matrix}
0.000 & 91.029 & 225.713 & 500.0 \\
0.259 & 0.237 & 0.345 & 0.159
\end{matrix}$ & $37^2$ & $\begin{matrix}
0.000 & 89.881 & 129.590 & 170.306 & 220.191 & 500.0 \\
0.260 & 0.170 & 0.091 & 0.019 & 0.310 & 0.150
\end{matrix}$\\
\hline
\hline
\end{tabular}
\caption{\it Bayesian $T$-optimal discriminating designs for the models in \eqref{example2}.  The weights
in the criterion \eqref{2.5}  are given by $p_{i,j}=1/6$; $1 \leq i < j \leq 4$ and the support  and  masses of the
prior distribution are
defined by \eqref{prior1} and \eqref{prior2}, respectively.   \label{tab2}}
\end{center}
\end{table}

\begin{figure}[h]
\centering
%\begin{tabular}{ccc}
%\subcaptionbox{$\sigma^2 = 0$.}
  {\includegraphics[width=45mm]{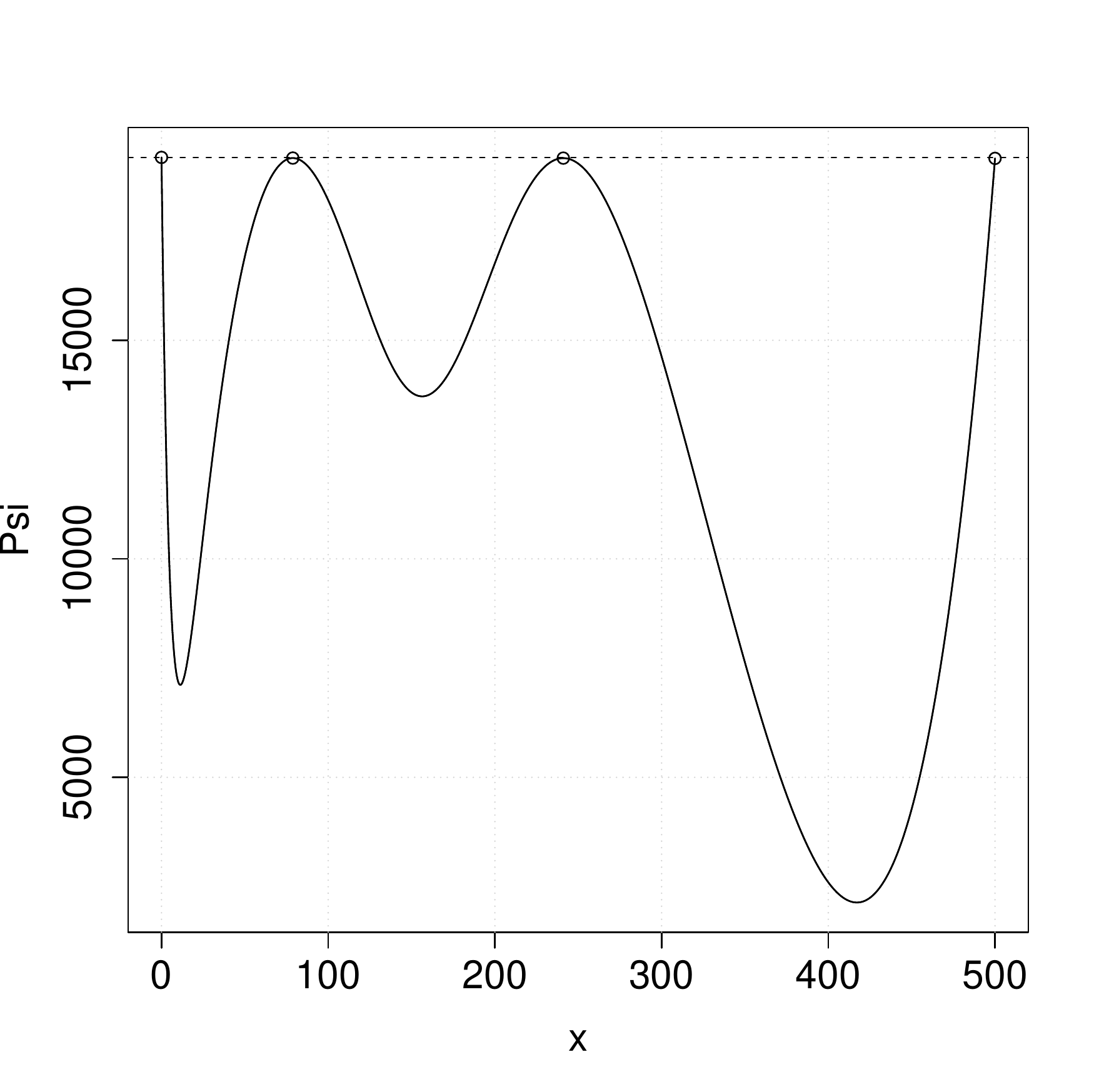}}
%\subcaptionbox{$\sigma^2 = 20$.}
  {\includegraphics[width=45mm]{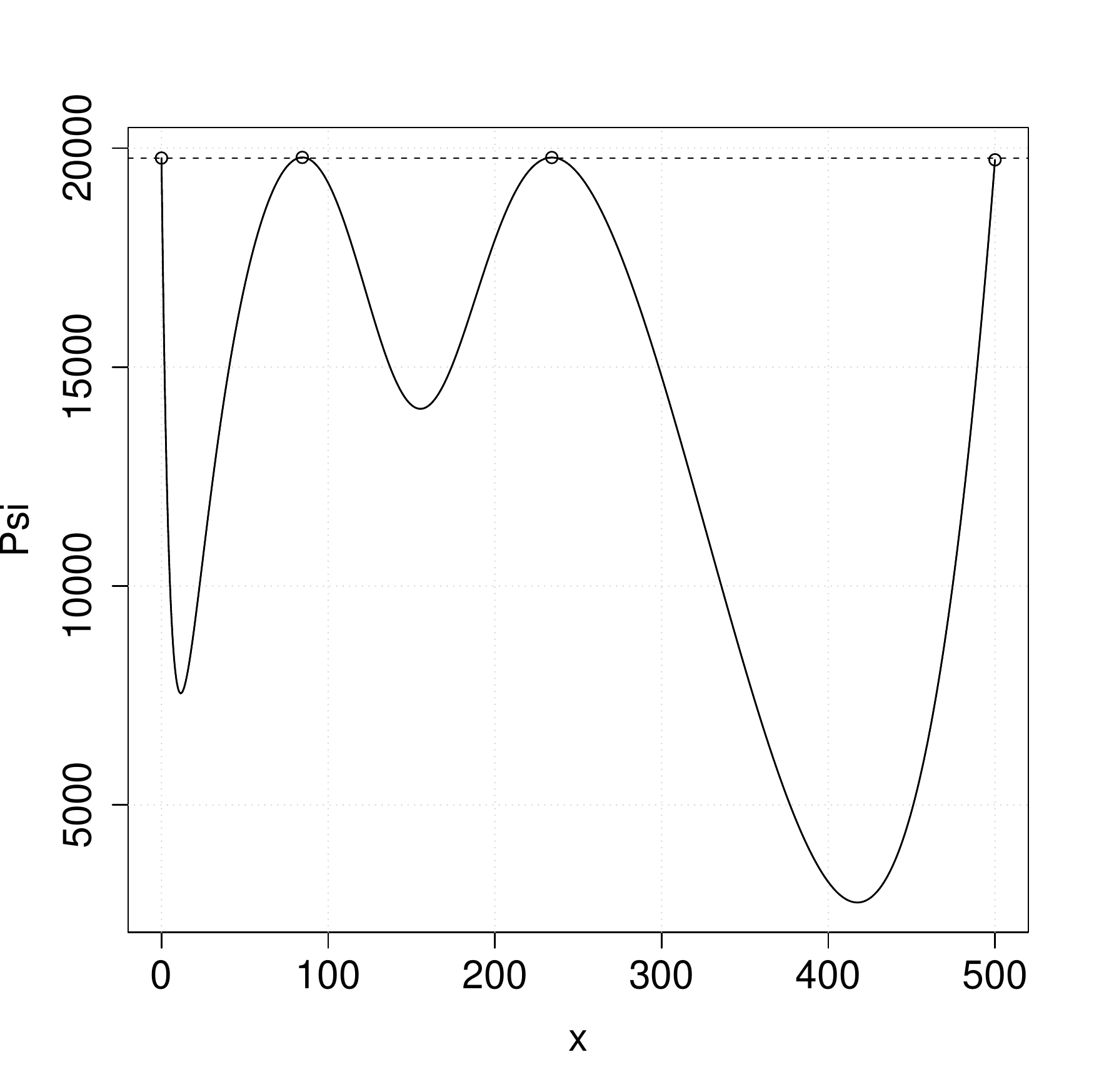}}
%\subcaptionbox{$\sigma^2 = 30$.}
  {\includegraphics[width=45mm]{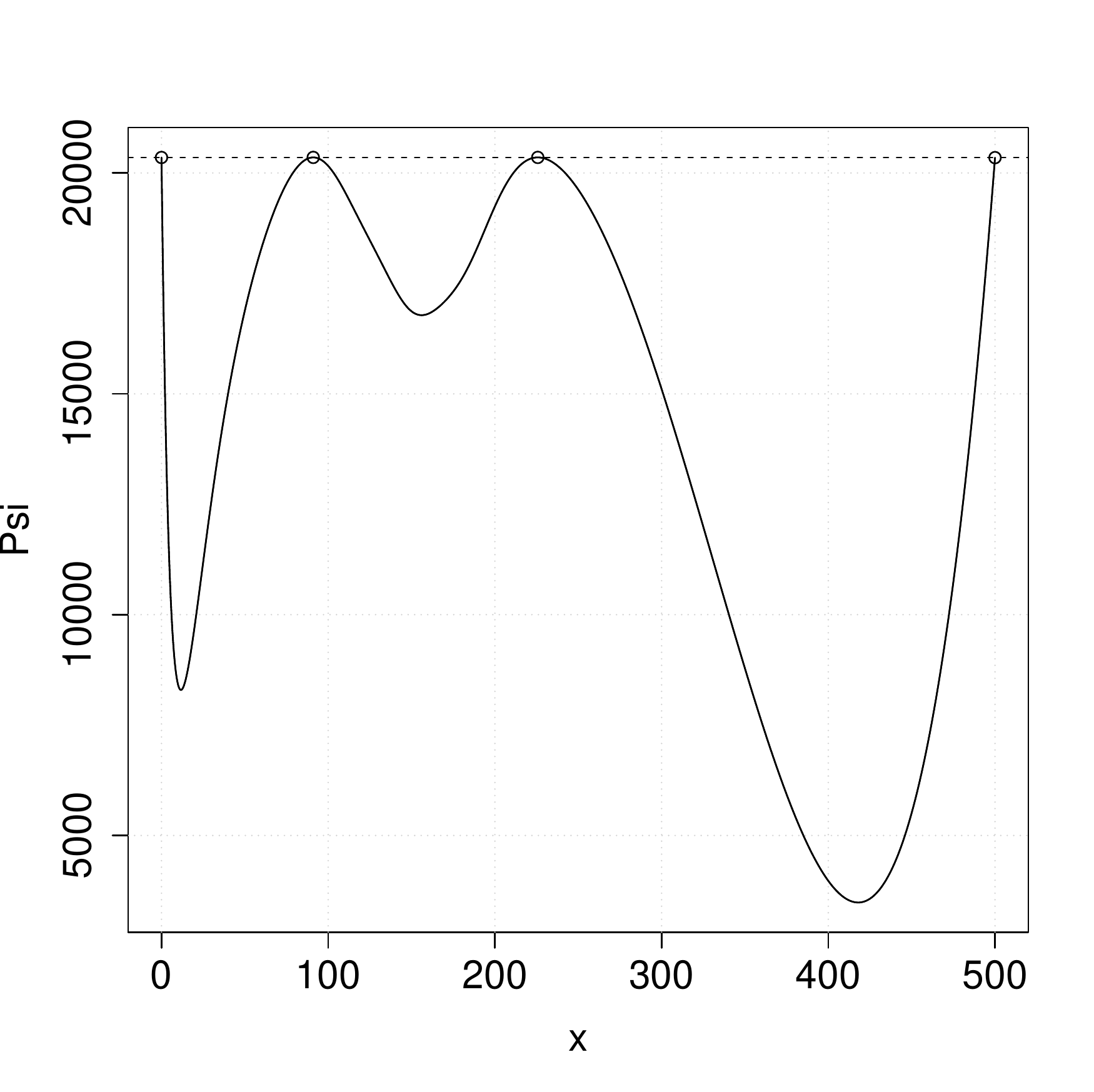}} \\
$\sigma^2 = 0$~~~~~~~~~~~~~~~~~~~~~~~~$\sigma^2 = 20$ ~~~~~~~~~~~~~~~~~~~ $\sigma^2 = 30$  \\
% \subcaptionbox{$\sigma^2 = 335$.}
  {\includegraphics[width=45mm]{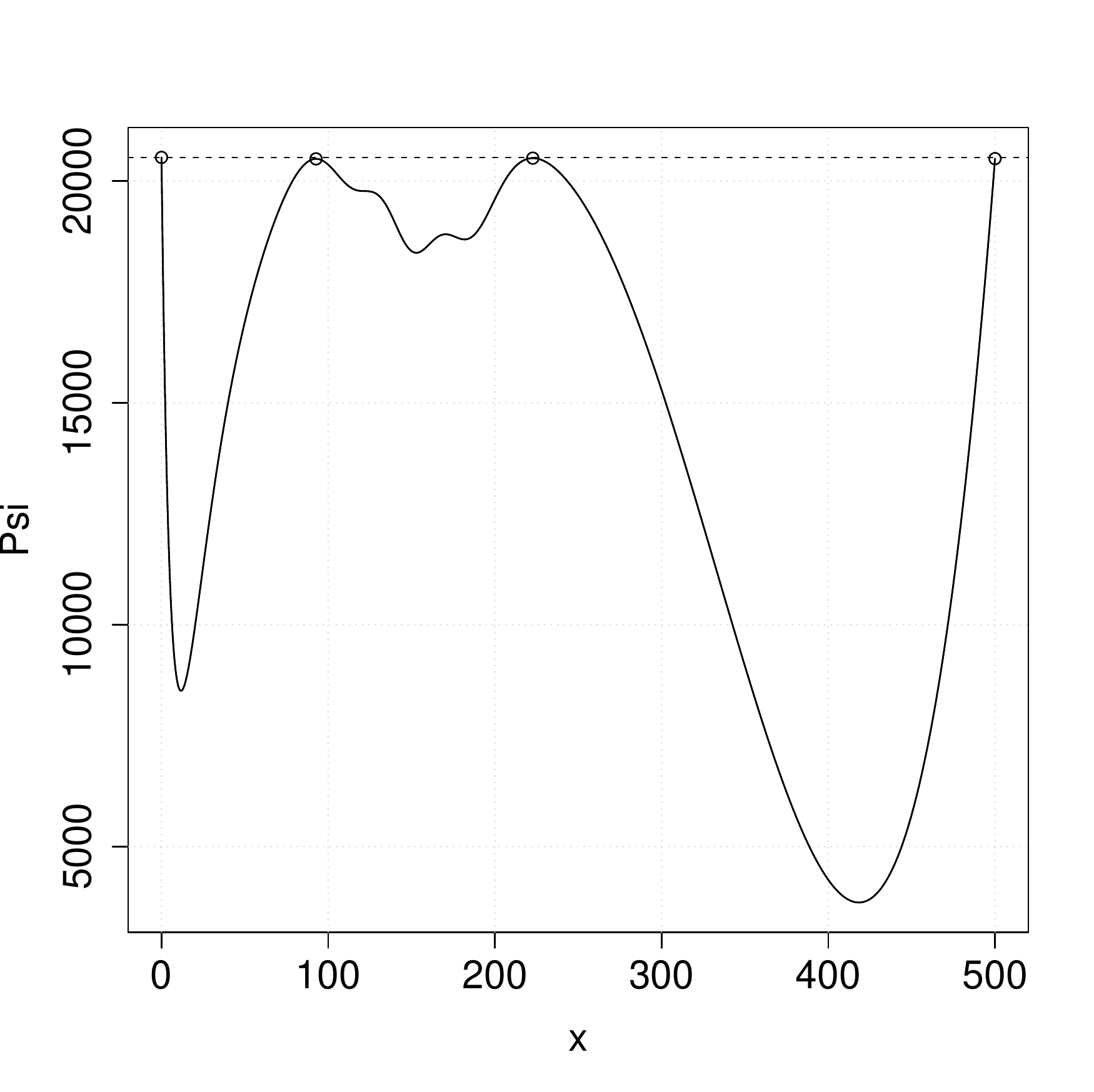}}
%\subcaptionbox{$\sigma^2 = 35$.}
  {\includegraphics[width=45mm]{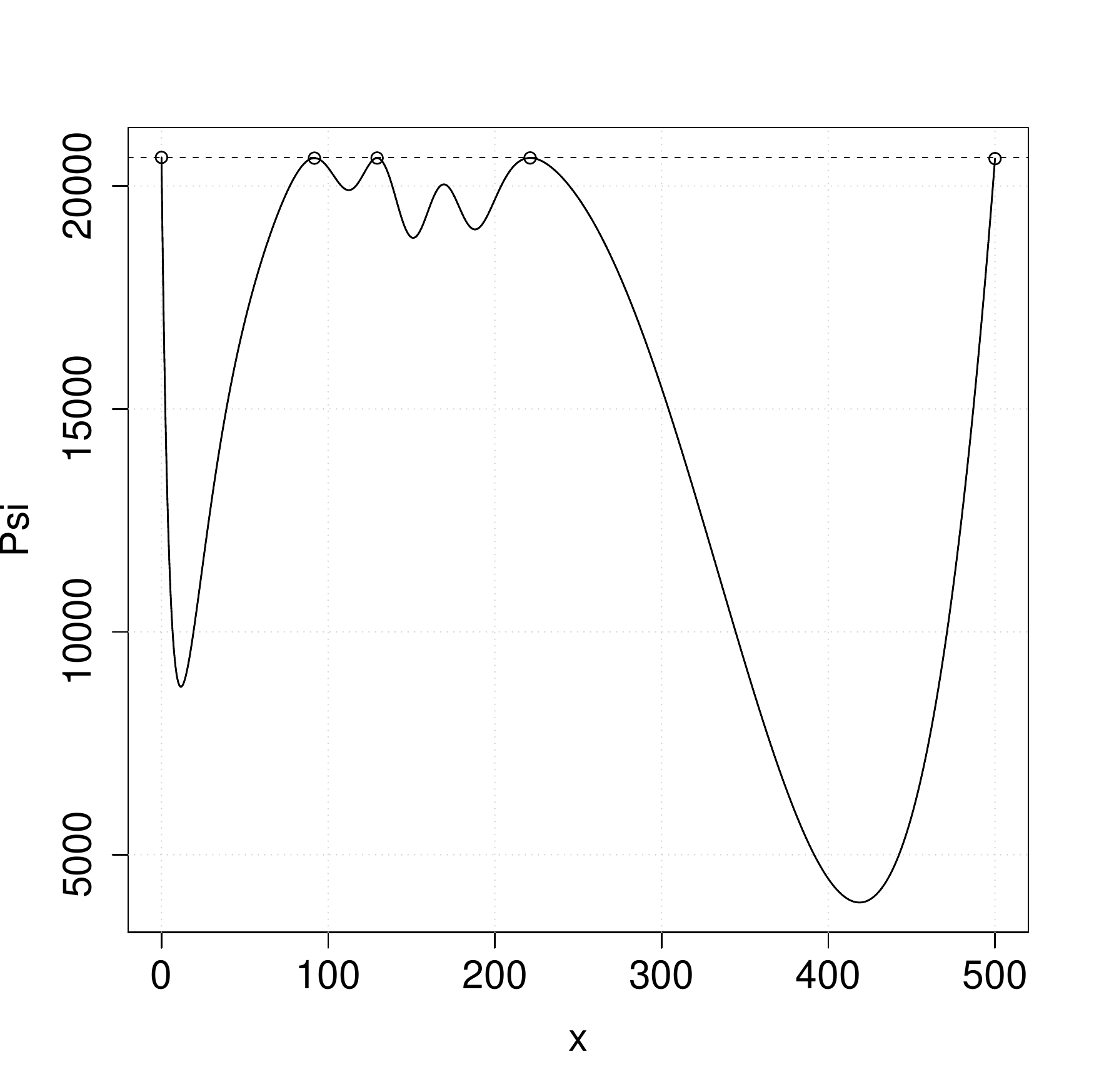}}
%\subcaptionbox{$\sigma^2 = 37$.}
 {\includegraphics[width=45mm]{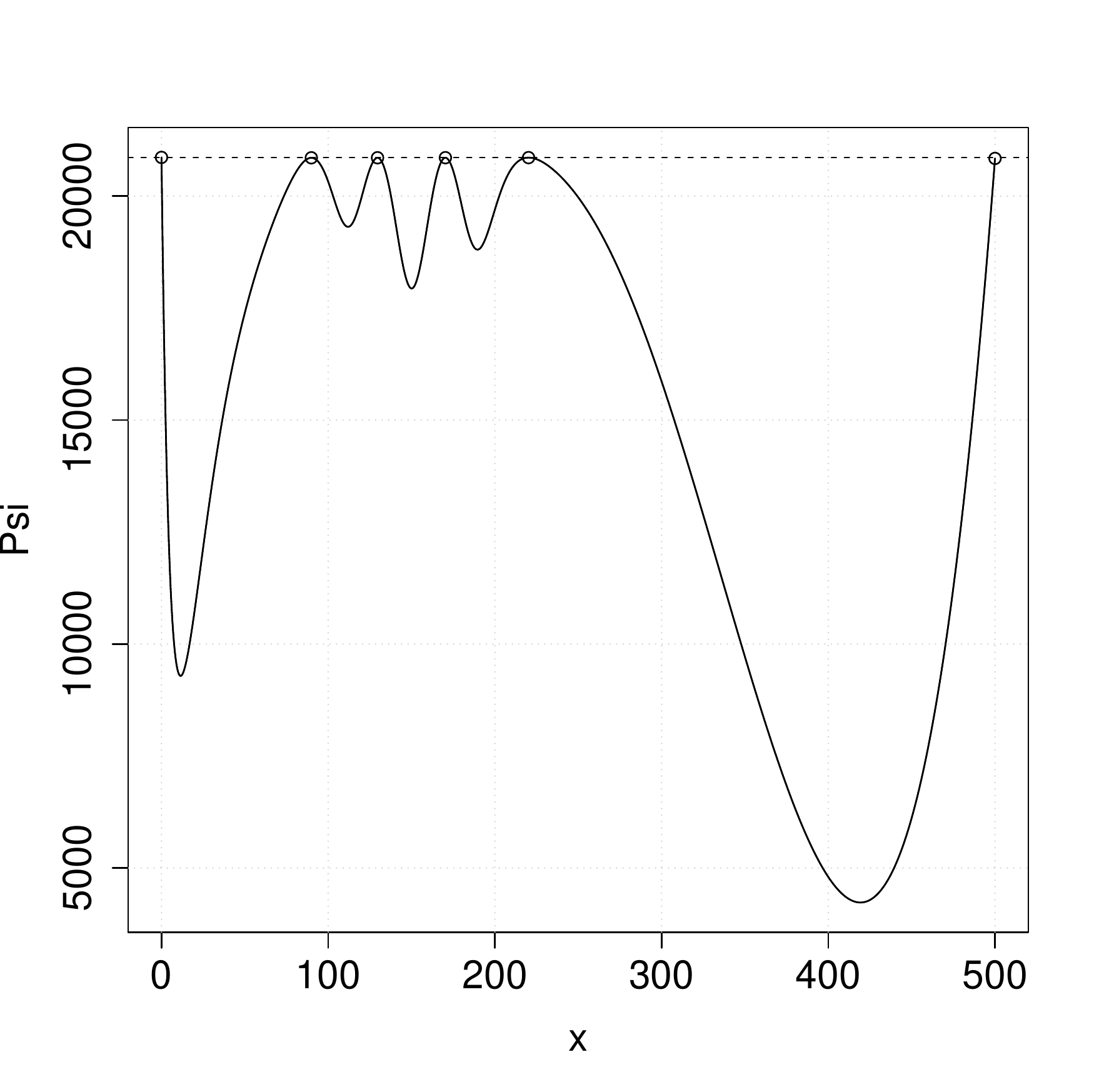}}
$\sigma^2 = 33$~~~~~~~~~~~~~~~~~~~~~~~~$\sigma^2 = 35$ ~~~~~~~~~~~~~~~~~~~ $\sigma^2 =37$ \\
% \end{tabular}
\caption{\it \label{fig2}  The function on the left-hand side of inequality
\eqref{equiv} in the equivalence Theorem \ref{thm1} for the numerically calculated Bayesian $T$-optimal discriminating designs.
The  competing regression models are given in \eqref{example2}.}
\end{figure}

\bigskip

{\bf Acknowledgements.} Parts of this work were done during a visit of the
second author at the Department of Mathematics, Ruhr-Universit\"at
Bochum, Germany.
The authors would like to thank
M. Stein who typed this manuscript with considerable technical
expertise.
%We are also very grateful to two unknown referees for their constructive comments on the first version of our paper.
The work of H. Dette and V. Melas was supported by the Deutsche
Forschungsgemeinschaft (SFB 823: Statistik nichtlinearer dynamischer Prozesse, Teilprojekt C2).
The research of H. Dette reported in this publication was also partially supported by the National Institute of
General Medical Sciences of the National Institutes of Health under Award Number R01GM107639.
The content is solely the responsibility of the authors and does not necessarily
 represent the official views of the National
Institutes of Health.  V. Melas was also partially supported by Russian Foundation of Basic Research (Project 12.01.00747a).

\section{Proofs} \label{sec6}

\subsection{An auxiliary result} \label{sec6a}

\begin{lemma} \label{lemma1}
Let $\varphi(v,y)$ be a  twice continuously differentiable function of two  variables $v\in \mathcal{V} \subset \R^k$  and $y \in \mathcal{Y},  $ where   $ \mathcal{Y}   $ is a compact set.
Denote by ${ \mathcal{Y}^*  }$ the set of all points where the minimum
$\min_{y \in  \mathcal{Y}  } \varphi(v,y)$ is attained and let  $q \in \R^k$ be an arbitrary direction.
Then
\begin{align}
\frac{\partial \min_{y \in { \mathcal{Y}^*  }} \varphi(v,y)}{\partial q}= \min_{y \in{ \mathcal{Y} ^* }}\frac{\partial\varphi(v,y)}{\partial q}.
\label{formula_diff}
\end{align}
\end{lemma}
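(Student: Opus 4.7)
The plan is to recognize the identity as a Danskin-type envelope theorem for the optimal value function $\Phi(v) := \min_{y \in \mathcal{Y}} \varphi(v,y)$, where the set of minimizers $\mathcal{Y}^* = \mathcal{Y}^*(v)$ is nonempty and closed by compactness of $\mathcal{Y}$ and continuity of $\varphi$. Writing $v(t) = v + tq$ for $t > 0$ and abbreviating $\partial_q \varphi(v,y) := q^{\mathrm{T}} \nabla_v \varphi(v,y)$, I would pinch the one-sided directional derivative of $\Phi$ from both sides by matching inequalities.

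\textbf{Step 1 (upper bound).} For any fixed $y^* \in \mathcal{Y}^*$ one has $\Phi(v(t)) \leq \varphi(v(t), y^*)$ while $\Phi(v) = \varphi(v, y^*)$, hence
$$
\frac{\Phi(v(t)) - \Phi(v)}{t} \;\leq\; \frac{\varphi(v(t), y^*) - \varphi(v, y^*)}{t} \;\xrightarrow[t \downarrow 0]{}\; \partial_q \varphi(v, y^*).
$$
Taking a $\limsup$ in $t$ and then a minimum over $y^* \in \mathcal{Y}^*$ yields
$\limsup_{t \downarrow 0} [\Phi(v(t)) - \Phi(v)]/t \leq \min_{y \in \mathcal{Y}^*} \partial_q \varphi(v,y)$.

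\textbf{Step 2 (lower bound).} For each small $t>0$ I pick $y_t \in \argmin_{y \in \mathcal{Y}} \varphi(v(t), y)$, which exists by compactness. The first sub-step shows that every cluster point of $\{y_t\}$ as $t \downarrow 0$ belongs to $\mathcal{Y}^*$: if $y_{t_n} \to y_0$ along a subsequence, then joint continuity of $\varphi$ together with continuity of $\Phi$ (a standard consequence of uniform continuity of $\varphi$ on compacts, or Berge's maximum theorem) gives $\varphi(v,y_0) = \lim \varphi(v(t_n), y_{t_n}) = \lim \Phi(v(t_n)) = \Phi(v)$, so $y_0 \in \mathcal{Y}^*$. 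The second sub-step exploits $\varphi(v, y_t) \geq \Phi(v)$ to get the comparison
$$
\Phi(v(t)) - \Phi(v) \;\geq\; \varphi(v(t), y_t) - \varphi(v, y_t) \;=\; t\,\partial_q \varphi(v, y_t) + o(t),
$$
where the $o(t)$ is uniform in $y_t$ because $\varphi \in C^2$ on a compact neighbourhood of $(v,\mathcal{Y})$. Dividing by $t$ and extracting a subsequence along which $y_{t_n} \to y_0 \in \mathcal{Y}^*$, I conclude $\liminf_{t\downarrow 0}[\Phi(v(t))-\Phi(v)]/t \geq \partial_q\varphi(v,y_0) \geq \min_{y \in \mathcal{Y}^*}\partial_q\varphi(v,y)$.

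Combining the two steps pinches $\liminf = \limsup = \min_{y \in \mathcal{Y}^*}\partial_q\varphi(v,y)$, which is the asserted formula; the analogous argument with $t \downarrow 0$ replaced by $t \uparrow 0$ handles the two-sided directional derivative under the $C^2$ assumption. \textbf{The main obstacle} is the lower-bound half of Step 2: one cannot differentiate $\varphi(v(t), y_t)$ naively since $y_t$ itself moves with $t$. The resolution is to compare $\varphi(v(t), y_t)$ against $\varphi(v, y_t)$ rather than against $\Phi(v)$ directly, and then use the $C^2$ smoothness of $\varphi$ on the compact set $\mathcal{Y}$ to obtain a Taylor expansion with a remainder that is uniform in the moving minimizer, so that passing to a convergent subsequence of $\{y_t\}$ delivers an element of $\mathcal{Y}^*$ in the limit.
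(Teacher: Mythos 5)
Your argument is correct, but it is worth noting that the paper does not actually prove this lemma at all: its ``proof'' is a one-line citation to Pshenichny (1971), p.~75, where this Danskin--Demyanov directional-differentiability result is established. What you have written is a self-contained version of essentially that classical argument --- pinching the difference quotient of $\Phi(v)=\min_{y\in\mathcal Y}\varphi(v,y)$ between an upper bound obtained from any fixed minimizer and a lower bound obtained from the moving minimizers $y_t$, using compactness of $\mathcal Y$, the closedness of the cluster set of $\{y_t\}$ inside $\mathcal Y^*$, and a Taylor remainder that is uniform in $y$ --- so it buys the reader a complete proof where the paper offers only a pointer. Two small remarks. First, your identification of the key difficulty (that one cannot differentiate $\varphi(v(t),y_t)$ naively because $y_t$ moves) and its resolution via the two-sided comparison $\varphi(v,y_t)\ge\Phi(v)$, $\varphi(v(t),y_t)=\Phi(v(t))$ is exactly right, and only $C^1$ regularity in $v$ with a gradient jointly continuous on $\mathcal V\times\mathcal Y$ is really needed, so the paper's $C^2$ hypothesis is more than enough. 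Second, your closing sentence about obtaining the \emph{two-sided} derivative by repeating the argument with $t\uparrow 0$ is too quick: a min-function is in general only one-sidedly directionally differentiable (consider $\min(v,-v)=-|v|$ at $v=0$), and the limits from the right and from the left agree only when $\partial_q\varphi(v,\cdot)$ is constant on $\mathcal Y^*$. Since the notation $\partial/\partial q$ in the lemma is the standard one-sided directional derivative --- which is also all that is used later in the paper, e.g.\ in the expansion of $T_{\mathrm P}(\xi(\alpha))$ at $\alpha=0$ with $\alpha\ge 0$ --- this does not affect the validity of what you proved, but the final sentence should be dropped or qualified.
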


\begin{proof} See \cite{Pshenichny1971}, p. 75.
\end{proof}

\subsection{Proofs} \label{sec6b}

{\bf Proof of Theorem \ref{thm2.2}.} Assume without loss of generality that $p_{i,j} >0$ for all $i,j=1,\dots,\nu$. Let $\xi^* $ denote any locally $T$-optimal discriminating design and
let $\theta=(\theta_{i,j})_{i,j=1,\dots,\nu}$ denote  the vector consisting of all  $\theta_{i,j} \in \Theta_{i,j}(\xi^*) $.
We introduce the function
\begin{align} \label{varphi}
\varphi(x, \theta) & = \sum_{i,j = 1}^{\nu} p_{i,j} \big[ \eta_i(x,\overline{\theta}_{i}) - \eta_j(x,{\theta}_{i,j}) \big]^2
\end{align}
and   consider the product measure
\begin{align} \label{prodmeas}
\mu(d\theta)= \prod_{i,j=1,\ldots,\nu} \mu_{i,j}(d\theta_{i,j}),
\end{align}
 where $\mu_{ij}$ are measures on the sets  ${\Theta}_{i,j}^*(\xi^*)$ defined by (\ref{thetamin}).
Similarly, we define  $\mu^* (d\theta)= \prod_{i,j=1,\ldots,\nu} \mu_{i,j}^* (d\theta_{i,j})$ as the product measure of the measures $\mu^*_{i,j}$
in Theorem \ref{thm1}. From this result we have
\begin{eqnarray*}
T_{\mathrm{P}}(\xi^*) &\geq& \sup_\zeta \int_\mathcal{X}  \int_{\Theta^*(\xi^*)}  \varphi(x, \theta) \mu^* (d\theta) \zeta(dx) \\
& \geq &\inf_\mu \sup_\zeta \int_\mathcal{X} \int_{\Theta^*(\xi^*)} \varphi(x, \theta) \mu (d\theta) \zeta(dx) = \sup_\zeta \inf_\mu \int_\mathcal{X} \int_{\Theta^*(\xi^*)}  \varphi(x, \theta) \mu (d\theta) \zeta(dx),
\end{eqnarray*}
where the  $\sup$  and $\inf$  are calculated in the class of  designs $\zeta$ on $\mathcal{X}$ and
product measures  $\mu$   on  $\Theta^*(\xi^*)=\otimes_{i,j=1}^{\nu} \Theta_{i,j(\xi^*)}^* $, respectively.
It now follows   that the characterizing inequality (\ref{equiv}) in Theorem \ref{thm1}  is equivalent to the inequality
$$
\sup_{\zeta}  Q(\zeta,\xi^*) \leq T_{\mathrm{P}}(\xi^*).
$$
Consequently, any non-optimal design must satisfy the opposite inequality. \hfill $\Box$

\bigskip

\textbf{Proof of Corollary \ref{thm1a}:}
Let $\xi$ denote a design such that    $T_\mathrm{P}(\xi)>0$ and  recall the definition of  the set $\Theta_{ij}^*(\xi)$
in \eqref{thetamin}.  We consider for a vector
$\theta= (\theta_{i,j})_{ i,j=1,\ldots,\nu} \in \Theta^*(\xi)= \otimes_{ i,j=1,\ldots,\nu}  \Theta_{i,j}^*(\xi)$, the function $\varphi$ is defined in \eqref{varphi}
and product measures $\mu(d\theta)$ of the form \eqref{prodmeas}
on $\Theta^*(\xi)$.
Now the  well  known  minimax theorem and the definition of the function $Q$ in \eqref{qfct}  yields
\begin{eqnarray*}
\max_{x \in \mathcal{X}}\Psi(x,\xi)&=&\inf_\mu \max_{x \in \mathcal{X}} \int_{\Theta^*(\xi)}  \varphi(x, \theta) \mu (d\theta) =
\inf_\mu \sup_\zeta \int_{\mathcal X}  \int_{\Theta^*(\xi)}  \varphi(x, \theta) \mu (d\theta) \zeta (d x)   \\
&=&\sup_\zeta \inf_\mu
 \int_{\mathcal X}  \int_{\Theta^*(\xi)}
\varphi(x, \theta) \mu (d\theta) \zeta (dx) =
\sup_\zeta \inf_{\theta \in \Theta^*(\xi) }\int \varphi(x, \theta) \zeta (dx) = \sup_\zeta Q(\zeta ,\xi),
\end{eqnarray*}
where the infimum is calculated with respect to all measures $\mu$ of the form \eqref{prodmeas} and the
supremum is calculated with respect to all experimental designs $\zeta$ on  $\mathcal{X}$.
Note that $\mathcal X$ is compact by assumption and it can be checked that the set $\Theta^*(\xi)$ is
also compact as a closed subset of a compact set. Consequently all
  suprema and infima are achieved and   there exists a
design $\zeta^*$ supported at the set of local maxima of the function $\Psi(x,\xi)$,
 such that
$$
Q(\zeta^*,\xi)=\sup_\zeta Q(\zeta,\xi) = \max_{x \in \mathcal{X}} \Psi(x,\xi).
$$
The assertion of Corollary  \ref{thm1a} now  follows from Theorem \ref{thm2.2}. \hfill $\Box$

\medskip

\textbf{Proof of Theorem \ref{thm2}:}
Obviously, the inequality
\begin{align*}
T_{\mathrm{P}}(\{\mathcal{S}_{[s]},\omega_{[s]}\}) \leq T_{\mathrm{P}}(\{\mathcal{S}_{[s+1]},\omega_{[s+1]}\}) 
\end{align*}
holds  for all $s$
as optimization with respect to   $\omega$ occurs on a larger set. Moreover, the sequence  $T_{\mathrm{P}}(\xi_s)$ is bounded from above by
$T_{\mathrm{P}}(\xi^*)$ and  has a limit, which is denoted by   $T^{**}_{\mathrm{P}}$.
Consequently, there exists a subsequence of designs, say $ \xi_{s_j},j=1,2, \ldots$ converging to a design, say $\xi^{**}$.
Note that $T_P$ is upper semi-continuous as the infimum of continuous functions, which implies
 $T_{\mathrm{P}}(\xi^{**})= T^{**}_{\mathrm{P}}$. Now, assume that  $T_{\mathrm{P}}(\xi^{**}) < T_{\mathrm{P}}(\xi^{*})$,  then $\xi^{**}$ is not locally $T$-optimal and by
 Theorem \ref{thm2.2} there exists a constant $\delta > 0$ such that
\begin{align*}
{\sup_{\zeta} Q(\zeta,\xi^{**}) - T_{\mathrm{P}}(\xi^{**}) = 2 \delta},
\end{align*}
where the function $Q$ is defined in \eqref{qfct}.
Therefore for sufficiently large $j$, say,  $j \ge  N$ we obtain (using again the lower semi-continuity of
$ \sup_{\zeta} Q(\zeta ,\xi)$) that
\begin{align*}
\sup_{\zeta} Q(\zeta,{\xi}_{s_j}) - T_{\mathrm{P}}(\xi_{s_j})> \delta,
\end{align*}
whenever $j\ge N$. {Note that by construction the sequence $(T_{\mathrm{P}}(\xi_s))_{s \in \N}$ is increasing and therefore
\begin{equation}\label{ungl}
T_{\mathrm{P}}({\xi}_{s_{j+1}}) - T_{\mathrm{P}}(\xi_{s_{j}})  \geq T_{\mathrm{P}}(\xi_{s_{j} +1})- T_{\mathrm{P}}(\xi_{s_{j}}).
\end{equation}
In order to estimate the right hand side
 we  consider  for $j \ge N$ and $\alpha \in [0,1]$ the design
\begin{align*}
\tilde{\xi}_{s_{j+1}}(\alpha) = (  1 - \alpha ) {\xi}_{s_j} + \alpha \zeta_{j},
\end{align*}
where
   $\zeta_{j}$ is the measure for  which the function
$
Q( \zeta,{\xi}_{s_j} )
$
attains its maximal value in the class of all experimental designs supported at the local maxima of the function $\Psi(x,{\xi}_{s_j})$, and
define 
\begin{align*}
\alpha_{s_{j+1}} = \argmax_{0 \leq \alpha \leq 1} T_{\mathrm{P}}(\tilde{\xi}_{s_{j+1}}(\alpha)).
\end{align*}
By  construction of ${\xi}_{s_{j}+1}$ is the best design supported  at $\mbox{supp}({\xi}_{s_j}) \cup \mbox{supp}(\zeta_j)$,    and \eqref{ungl} yields
\begin{equation} \label{ungl1}
T_{\mathrm{P}}({\xi}_{s_{j+1}}) \geq T_{\mathrm{P}}({\xi}_{s_{j}+1}) \geq  T_{\mathrm{P}}(\tilde {\xi}_{s_{j+1}}(\alpha_{s_{j+1}})).
\end{equation}
We introduce the notations  $
h(j,\alpha)= T_{\mathrm{P}}(\tilde {\xi}_{s_{j}}(\alpha))
$,
%$ h(N,\alpha) = T_{\mathrm{P}}(\xi_N) > 0$
and note  that
\begin{align*}
\frac{\partial T_{\mathrm{P}}(\tilde {\xi}_{s_{j+1}}(\alpha))}{\partial\alpha}\Big|_{\alpha=0}=
Q(\zeta_j, {\xi}_{s_j})
-T_{\mathrm{P}}
({\xi}_{s_j}) =
\sup_\zeta  Q (\zeta, {\xi}_{s_j})
-T_{\mathrm{P}}({\xi}_{s_j})>\delta.
\end{align*}
A  Taylor expansion gives 
\begin{align*}
h(j+1,\alpha_{s_{j+1}})-h(j+1,0)  = & \max_{\alpha \in [0,1] } \big [  T_{\mathrm{P}}( \tilde{\xi}_{s_{j+1}}(\alpha)) - T_{\mathrm{P}}( \tilde{\xi}_{s_{j+1}}(0 ))\big]  \\
 \geq &  \max_{\alpha \in [0,1] }\Big[\alpha \frac{\partial T_{\mathrm{P}}( \tilde{\xi}_{s_{j+1}}(\alpha))}{\partial\alpha}\Big|_{\alpha=0}
 -  \frac{1}{2}\alpha^2K \Big]
 \notag > \max_{\alpha  \in [0,1]  }\Big[\alpha \delta - \frac{1}{2}\alpha^2K\Big] = \frac { \delta^2}{2K},
\end{align*}
where   $K$ is an absolute upper bound of the second derivative. Therefore it  follows  from  \eqref{ungl1}  that 
\begin{align*}
T_{\mathrm{P}}(\xi_{s_{j+1}}) - T_{\mathrm{P}}(\xi_{s_j}) &\geq T_{\mathrm{P}}(\xi_{s_j+1}) - T_{\mathrm{P}}(\xi_{s_j}) \\
&\geq T_{\mathrm{P}}(\tilde\xi_{s_{j+1}}(\alpha_{s_{j+1}})) - T_{\mathrm{P}}(\xi_{s_j}) = h(j+1,\alpha_{s_{j+1}}) - h(j+1,0) \geq \frac{\delta^2}{2K}.
\end{align*}
which gives for $L > N+1$
\begin{align*}
T_{\mathrm{P}}(\xi_{s_{L}}) - T_{\mathrm{P}}(\xi_{s_N})  = \sum_{j = N}^{L - 1} \big [ T_{\mathrm{P}}(\xi_{s_{j+1}}) - T_{\mathrm{P}}(\xi_{s_j}) \big ]  \geq \left[ L - N \right] \frac{\delta^2}{2K}.
\end{align*}
The left hand side of this  inequality converges to the finite value $T(\xi^{**}) - T(\xi_{s_N})$ as   $L \to \infty$, while   the right hand side converges
to  infinity. Therefore we obtain a contradiction to our
assumption  $T_{\mathrm{P}}(\xi^{**})  < T_{\mathrm{P}}(\xi^{*})$, which proves the assertion of Theorem \ref{thm2}.

\bigskip

\textbf{Proof of Lemma \ref{lem2}:}
Fix $t \in \{ 1,\dots,n \}$ and note that $w_t=1 - \sum^n_{\ell =1, \ell \neq t} w_\ell$.
 Under Assumptions \ref{assum1} and \ref{assum2}  we obtain by formula (\ref{formula_diff})
\begin{align*}
\frac{\partial g(\omega)}{\partial \omega_k} &= \sum_{i,j = 1}^{\nu} p_{i,j} \big[ \eta_i(x_k,\overline{\theta}_{i}) - \eta_j(x_k,\widehat{\theta}_{i,j}(\omega)) \big]^2 - \sum_{i,j = 1}^{\nu} p_{i,j}
\big[ \eta_i(x_t,\overline{\theta}_{i}) - \eta_j(x_t,\widehat{\theta}_{i,j}(\omega)) \big]^2
%\\
%&= \sum_{i,j = 1}^{\nu} p_{i,j}\left[a_{k,i} - b_{k,i,j} \right]^2-\sum_{i,j = 1}^{\nu} p_{i,j} \left[a_{t,i} - b_{t,i,j} \right]^2,
\end{align*}
The condition $
\frac{\partial g(\omega)}{\partial \omega_k} = 0, \; k = 1,\dots,n, \; k \neq t
$
is the necessary condition for weight optimality and consequently it follows from the definition of the function
$
\Psi(x,\overline{\xi}_{s+1})
$
that this function attains one and the same value for all support points of the design $\overline\xi_{s+1}$.

\bigskip

\textbf{Proof of Theorem \ref{conv_theorem}:}
The proof is similar to the proof of Theorem \ref {thm2}. Denote
\begin{align*}
h(\gamma,\alpha)= g(\overline{\omega}_{(\gamma)}(\alpha)),
\end{align*}
where the vector $\overline{\omega}_{(\gamma)} (\alpha^*)$ is calculated at the $\gamma$th iteration.
 Since the sequence $ g(\omega_{(\gamma)})$ is bounded and increasing (by  construction)  it converges  to some limit, say  $ g^{**}$.
Consequently  there exists a subsequence of vector of weights, say $\overline \omega_{(\gamma_j)},j=1,2, \ldots$ converging to a vector, say $\overline \omega^{**}$.
 Note that $ g$ is upper semi-continuous as the infimum of continuous functions, which implies
 $ g(\overline{\omega}^{**})= g^{**}$. Now, assume that  $ g(\overline{\omega}^{**})  <   g(\omega^{*})$,  then it follows
by an application of Theorem  \ref{thm1}   with $\mathcal{X}=\{x_1,\ldots , x_n\}$
 that there exists  a constant $\delta > 0$ such that
\begin{align*}
\frac{\partial  g(\overline{\omega}(\alpha))}{\partial \alpha} \Bigl|_{\alpha=0} = 2\delta > 0 .
\end{align*}
Here  the vector  $\overline{\omega}(\alpha)$ is defined in the same way as $\overline{\omega}_{(\gamma )} (\alpha)$, where
 $\omega_{(\gamma)}$ is replaced by $\omega=\omega^{**}$.
Therefore for sufficiently large $j$, say,  $j \ge  N$ we obtain (using the lower semi-continuity of
${g}$) that
$
h(\gamma_j,0)> \delta ,
$
and  a Taylor expansion yields
\begin{align*}
h(\gamma_{j+1},\alpha^*_{(\gamma_{j+1})})-h(s_j,\alpha^*_{(\gamma_j)}))\geq
\max_\alpha \Big( \alpha \frac{\partial  g(\overline\omega(\alpha))}{\partial \alpha}
-  \frac{1}{2}\alpha^2K  \Big) =\frac{\delta^2}{2K},
\end{align*}
where  $\alpha^*_{(\gamma_j)}$ is the value $\alpha^*$ from the $\gamma_j$th iteration and
 $K$ is an absolute upper bound of the second derivative.
Using the  same arguments as in the proof of Theorem \ref {thm2} we obtain a  contradiction, which proves the assertion of the theorem.

\bigskip

\setstretch{1.25}
\setlength{\bibsep}{1pt}
\begin{small} \footnotesize
 \bibliographystyle{apalike}
\itemsep=0.5pt
\bibliography{model}
\end{small}
\end{document}